\newcommand{\citex}[1]{\citeauthor{#1}~\shortcite{#1}}
\def\leftcite{\@up[}\def\rightcite{\@up]}
\def\cite{\def\citeauthoryear##1##2{\def\@thisauthor{##1}%
    \ifx \@lastauthor \@thisauthor \relax \else##1, \fi ##2}\@icite}
\def\shortcite{\def\citeauthoryear##1##2{##2}\@icite}
\def\citeauthor{\def\citeauthoryear##1##2{##1}\@nbcite}
\def\citeyear{\def\citeauthoryear##1##2{##2}\@nbcite}
\def\@icite{\leavevmode\def\@citeseppen{-1000}%
  \def\@cite##1##2{\leftcite\nobreak\hskip 0in{##1\if@tempswa , ##2\fi}\rightcite}%
  \@ifnextchar [{\@tempswatrue\@citex}{\@tempswafalse\@citex[]}}
\def\@nbcite{\leavevmode\def\@citeseppen{1000}%
  \def\@cite##1##2{{##1\if@tempswa , ##2\fi}}%
  \@ifnextchar [{\@tempswatrue\@citex}{\@tempswafalse\@citex[]}}
\def\@citex[#1]#2{%
  \def\@lastauthor{}\def\@citea{}%
  \@cite{\@for\@citeb:=#2\do
    {\@citea\def\@citea{;\penalty\@citeseppen\ }%
      \if@filesw\immediate\write\@auxout{\string\citation{\@citeb}}\fi
      \@ifundefined{b@\@citeb}{\def\@thisauthor{}{\bf ?}\@warning
        {Citation `\@citeb' on page \thepage \space undefined}}%
      {\csname b@\@citeb\endcsname}\let\@lastauthor\@thisauthor}}{#1}}
\def\@biblabel#1{\def\citeauthoryear##1##2{##1, ##2}\@up{[}#1\@up{]}\hfill}
\def\@up#1{\leavevmode\raise.2ex\hbox{#1}}
\newtheorem{definition}{Definition}
\newtheorem{theorem}{Theorem}
\newtheorem{proposition}{Proposition}
\newtheorem{lemma}{Lemma}
\newtheorem{corollary}{Corollary}
\def\qed{\hspace*{\fill}{\rule[-0.5mm]{1.5mm}{3mm}}}
\def\endproof{\ifhmode\nobreak\qed\par\fi\medskip}
\def\AND     { \wedge                 }
\renewcommand{\mid}{\;{|}\;}%
\newcommand{\AS}{\ensuremath{\mathit{AS}}}
\newcommand{\Mod}{\ensuremath{\mathit{Mod}}}
\newcommand{\pmm}[3]{\ensuremath{{#1}|^{#2}_{#3}}}
\newcommand{\Pol}{{\rm P}}
\newcommand{\NP}{\mbox{\rm NP}\xspace}
\newcommand{\coNP}{\mbox{\rm coNP}\xspace}
\newcommand{\PiP}[1]{{\Pi}_{#1}^{P}}
\newcommand{\naf}{\ensuremath{\neg}}
\newcommand{\head}[1]{H(#1)}
\newcommand{\body}[1]{B(#1)}
\newcommand{\bodyp}[1]{B^+(#1)}
\newcommand{\bodyn}[1]{B^-(#1)}
\newcommand{\la}{\leftarrow}
\newcommand{\mse}{\models_{\mathit{SE}}}
\newcommand{\equivu}{\equiv_{u}}
\newcommand{\equivs}{\equiv_{s}}
\newcommand{\SEQ}{\mathit{SE}}
\newcommand{\UE}{\ensuremath{\mathit{UE}}}
\newcommand{\hX}{{\hat{X}}}
\newcommand{\hY}{{\hat{Y}}}
\renewcommand{\S}{{\cal S}}
\newcommand{\X}{{\cal X}}
\newcommand{\Y}{{\cal Y}}
\newcommand{\Z}{{\cal Z}}
\newcommand{\U}{{\cal U}}
\newcommand{\nop}[1]{}
\newcommand{\Comment}[1]{}
\newcommand{\commadots}[0]{,\ldots ,}
\newcommand{\oln}[1]{\overline{#1}}
\newcommand{\wh}[1]{\widehat{#1}}
\newcommand{\SB}{\{\,}%
\newcommand{\SM}{\;{|}\;}%
\newcommand{\SE}{\,\}}%
\newcommand{\Card}[1]{|#1|}
\newcommand{\CCard}[1]{\|#1\|}
\let\phi=\varphi
\let\epsilon=\varepsilon
\let\phi=\varphi
\newcommand{\at}{\text{\normalfont at}}
\newcommand{\At}{\mathit{At}}
\newcommand{\rsep}{;\;}
\newcommand{\F}[1]{F^{\mathit{#1}}}
\newcommand{\problemn}[1]{{\scshape #1}}
\newcommand{\SAT}{\problemn{Sat}\xspace}
\newcommand{\UNSAT}{\problemn{UnSat}\xspace}
\title{%
  Dual-normal Logic Programs -- the Forgotten Class%
  \footnote{This is the author's self-archived copy including detailed
    proofs. To appear in Theory and Practice of Logic Programming
    (TPLP), Proceedings of the 31st International Conference on Logic
    Programming (ICLP 2015).} %
  \thanks{This work has been funded by the Austrian Science Fund (FWF)
    through projects Y698 and P25518.} %
}
\author{%
  Johannes K. Fichte\\
  TU Wien, Austria\\
  University of Potsdam, Germany\\
  (e-mail: {fichte@kr.tuwien.ac.at})
  \and  Miros\l{}aw Truszczy\'nski\\
  University of Kentucky, Lexington, KY, USA\\
  (e-mail: {mirek@cs.engr.uky.edu})
  \and Stefan Woltran\\
  TU Wien, Austria\\
  (e-mail: {woltran@dbai.tuwien.ac.at})
}
\begin{document}
\maketitle

\begin{abstract}
Disjunctive Answer Set Programming is a powerful
declarative programming paradigm with complexity
beyond \NP. Identifying classes of programs for
which the consistency problem is in \NP is
of interest from the theoretical standpoint and can potentially lead to
improvements in the design of answer set programming solvers.
One of such classes consists of \emph{dual-normal programs},
where the number of positive body atoms in proper rules
is at most one. Unlike other classes of programs,
dual-normal programs have received little attention so far.
In this paper we study this class. We
relate dual-normal programs to propositional theories and
to normal programs by presenting several inter-translations.
With the translation from dual-normal to normal programs at hand, 
we introduce the novel class of \emph{body-cycle} free programs,
which are in many respects dual to head-cycle free programs.
We establish the expressive power of dual-normal programs in terms
of SE- and UE-models, and compare them to normal programs. We also 
discuss the complexity of deciding whether dual-normal
programs are strongly and uniformly equivalent.
\end{abstract}

\section{Introduction}

Disjunctive Answer Set Programming (ASP)~\cite{brew-etal-11-asp} is a
vibrant area of AI providing a declarative formalism for solving hard
computational problems. Thanks to the power of modern ASP
technology~\cite{2012Gebser}, ASP was successfully used in many
application areas, including product configuration~\cite{SoininenN99},
decision support for space shuttle flight
controllers~\cite{NogueiraBGWB01,BalducciniGN06}, team
scheduling~\cite{RiccaGAMLIL12}, and
bio-informatics~\cite{GuziolowskiVETCSS13}.

With its main decision problems located at the second level of the
polynomial hierarchy, full disjunctive ASP is necessarily
computationally involved. But some fragments of ASP have lower
complexity. Two prominent examples are the class of \emph{normal}
programs and the class of \emph{head-cycle free} (HCF)
programs~\cite{Ben-EliyahuD94}. In each case, the problem of the
existence of an answer set is \NP-complete. Identifying and
understanding such fragments is of theoretical importance and can also
help to make ASP solvers more efficient. A solver can detect whether a
program is from an easier class (e.g., is normal or head-cycle free)
and, if so, use a dedicated more lightweight machinery to process it.

HCF programs are defined by a \emph{global} condition taking into
account all rules in a program. On the other hand, interesting classes
of programs can also be obtained by imposing conditions on
\emph{individual} rules. Examples include the classes of Horn, normal,
negation-free, and purely negative programs. For instance, Horn
programs consist of rules with at most one atom in the head and no
negated atoms in the body, and purely negative programs consist of
rules with no atoms in the positive body. A general schema to define
classes of programs in terms of the numbers of atoms in the head and
in the positive and negative bodies of their rules was proposed
by~\citex{Truszczynski11}. In the resulting space of classes of
programs, the complexity of the \emph{consistency} problem (that is,
the problem of the existence of an answer set) ranges from \Pol~to
\NP-complete to $\Sigma_2^P$-complete. The three main classes of
programs in that space that fall into the \NP-complete category are
the classes of normal and negation-free programs (possibly with
constraints), mentioned above, and the class of programs whose
non-constraint rules have at most one positive atom in the
body~\cite{Truszczynski11}. While the former two classes have been
thoroughly investigated, the third class has received little attention
so far. In particular, the paper by \citex{Truszczynski11} only
identified the class and established the complexity of the main
reasoning tasks (deciding the consistency, and skeptical and credulous
reasoning).

In this paper, we study this ``forgotten'' class in more detail.  We
call its programs \emph{dual-normal}, since the reducts of their
non-constraint part are \emph{dual}-Horn. In fact, this is the reason
why for dual-normal programs the consistency problem is in \NP.
Lower complexity is not the only reason why dual-normal programs are
of interest. Let us consider a slight modification of the celebrated
translation of a $(2,\exists)$-QBF $F = \exists X \forall Y D$ into a
disjunctive program $P[F]$ devised by~\citex{eite-gott-95}. The
translation assumes that $D$ is a 3-DNF formula, say
$D =\bigvee_{i=1}^n (l_{i,1}\AND l_{i,2}\AND l_{i,3})$, where
$l_{i,j}$'s are literals over $X\cup Y$. To define $P[F]$ we introduce
mutually distinct fresh atoms $w$, $\overline{x}$, for $x\in X$,
$\overline{y}$, for $y\in Y$, and set 
\begin{align*}%
  {P}[F] = & \{ x \vee \overline{x} \la \mid x\in X\} \cup \{ y \vee
             \overline{y} \la \rsep y \la w\rsep \overline{y} \la w
             \mid y \in Y\} \cup \\
           & \{ w \la l^*_{i,1},l^*_{i,2},l^*_{i,3}\mid 1\leq i \leq
             n\} \cup \{ \bot \la \naf\ w\}
\end{align*}
where $l^*_{i,j} = \naf \overline{x}$ if $l_{i,j}=x$,
$l^*_{i,j} = \naf x$ if $l_{i,j}=\neg x$, $l^*_{i,j} = y$ for
$l_{i,j}=y$, $l^*_{i,j} = \overline{y}$ for $l_{i,j}=\neg y$. It can
be shown that ${P}[F]$ has at least one answer set if and only if $F$
is true. Let us consider the subclass of $(2,\exists)$-QBFs where each
term~$l_{i,1}\AND l_{i,2}\AND l_{i,3}$ in $F$ contains at most one
universally quantified atom from~$Y$. This restriction makes the
$\Sigma^P_2$-complete problem of the validity of a $(2,\exists)$-QBF
$\NP$-complete, only.  Moreover, it is easy to check that under that
restriction, ${P}[F]$ is a dual-normal program. Since, the consistency
problem for dual-normal programs is
$\NP$-complete~\cite{Truszczynski11} as well, dual-normal programs
thus allow here for a straightforward \emph{complexity-sensitive}
reduction with respect to the subclass of the $(2,\exists)$-QBF
problem mentioned above.  \citex{JanhunenEtAl06} proposed another
translation of QBFs into programs that, with slight modifications, is
similarly complexity-sensitive.

\paragraph{Main Contributions}
Our first group of results concerns connections between dual-normal
programs, propositional theories and normal programs. They are
motivated by practical considerations of processing dual-normal
programs. First, we give an efficient translation from dual-normal
programs to \SAT such that the models of the resulting formula encode
the answer sets of the original program. While similar in spirit to
translations to \SAT developed for other classes of programs, our
translation requires additional techniques to correctly deal with the
dual nature of the programs under consideration. Second, in order to
stay within the ASP framework we give a novel translation capable to
express dual-normal programs as normal ones, \emph{and} also vice
versa, in each case producing polynomial-size encodings.  In addition,
this translation allows us to properly extend the class of dual-normal
programs to the novel class of \emph{body-cycle free} programs, a
class for which the consistency problem is still located in \NP.

In the next group of results, we investigate dual-normal programs from 
a different angle: their ability to express concepts modeled by classes
of SE- and UE-models~\cite{Turner01,EiterFPTW13} and, in particular, to express
programs under the notions of equivalence defined in terms of SE- and 
UE-models~\cite{Eiter04TR}. Among others, we show that the classes 
of normal and dual-normal programs are incomparable with respect to 
SE-models, and that dual-normal programs are strictly less expressive 
than normal ones with respect to UE-models. We also present results
concerning the complexity of deciding strong and uniform equivalence
between dual-normal programs. 

\section{Preliminaries}\label{sec:prelimns}

A \emph{rule}~$r$ is an
expression~$\head{r} \la \bodyp{r},\naf \bodyn{r}$, where
$\head{r}=\{a_1\commadots a_l\}$,
$\bodyp{r}=\{a_{l+1}\commadots a_m\}$,
$\bodyn{r}=\{a_{m+1}\commadots a_n\}$, $l$, $m$ and $n$ are
non-negative integers, and $a_i$, $1\leq i\leq n$, are propositional
atoms.  We omit the braces in $H(r)$, $B^+(r)$, and $B^-(r)$ if the
set is a singleton. We occasionally write $\bot$ if
$\head{r}=\emptyset$.  We also use the traditional representation of a
rule as an expression
\begin{align}
  a_1\vee \cdots \vee a_l \leftarrow a_{l+1},\ldots,a_m,\naf
  a_{m+1},\ldots,\naf a_n.
\end{align}
We call $\head{r}$ the \emph{head} of $r$ and
$\body{r}=\{a_{l+1}\commadots a_m,\naf a_{m+1},$ $\ldots, \naf a_n\}$
the \emph{body} of $r$. 
A rule $r$ is \emph{normal} if $\Card{H(r)}\leq 1$, $r$ is \emph{Horn}
if it is normal and $B^-(r)=\emptyset$, $r$ is \emph{dual-Horn} if
$\Card{B^+(r)}\leq 1$ and $B^-(r)=\emptyset$, $r$ is an
\emph{(integrity) constraint} if $\head{r}=\emptyset$, $r$ is
\emph{positive} if $B^-(r)=\emptyset$, and $r$ is \emph{definite} if
$\Card{H(r)} = 1$. If $B^+(r)\cup B^-(r)=\emptyset$, we simply write
$H(r)$ instead of $H(r) \leftarrow \emptyset, \emptyset$.

A \emph{disjunctive logic program} (or simply a \emph{program}) is a
finite set of rules.
We denote the set of atoms occurring in a program~$P$ by $\at(P)$.  We
often lift terminology from rules to programs.  For instance, a
program is \emph{normal} if all its rules are normal. We also identify
the parts of a program $P$ consisting of proper rules as
$P_r= \{ r\in P \mid H(r)\neq\emptyset \}$ and constraints as
$P_c = P \setminus P_r$. In this paper we are particularly interested
in the following class.

\begin{definition}\label{def:dnorm}
  A program~$P$ is called \emph{dual-normal} if each rule~$r$ of $P$
  is either a constraint or $\Card{B^+(r)}\leq 1$. Programs that are
  both normal and dual-normal are called \emph{singular}.\footnote{%
    Singular programs were also considered by~\citex{Janhunen06},
    however under a different name.}
\end{definition}
Note that dual-Horn programs may contain positive constraints with a
single body atom but arbitrary constraints are forbidden in contrast
to dual-normal programs.

Let $P$ be a program and $t$ a fresh atom. We define 
\begin{align*}
  P[t] =& 
          \{H(r)\leftarrow t,\neg B^{-}(r)\mid r\in P, B^+(r)=\emptyset \} \cup  
          \{ r\ \mid r\in P, B^+(r)\neq\emptyset \}.
\end{align*}

This transformation ensures non-empty positive bodies in rules and
turns out to be useful in analyzing the semantics of dual-normal
programs.

An \emph{interpretation} is a set~$I$ of atoms.  An interpretation~$I$
is a \emph{model} of a program~$P$, written $I\models P$, if $I$
\emph{satisfies} each rule $r\in P$, written $I\models r$, that is, if
$(H(r)\cup B^-(r))\cap I\neq \emptyset$ or
$B^+(r) \setminus I \neq \emptyset$.

In the following when we say that a set~$M$ is maximal (minimal) we
refer to inclusion-maximality (inclusion-minimality). A Horn program
either has no models or has a unique least model. Dual-Horn programs
have a dual property.

\begin{proposition}\label{prop:3}
  Let $P$ be dual-Horn. Then $P$ has no models or has a unique maximal
  model. 
\end{proposition}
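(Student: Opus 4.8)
The plan is to reduce the statement to the already-stated fact that a Horn program has no models or a unique least model, by exploiting the syntactic duality between the two classes. I will treat interpretations as subsets of $\at(P)$ and use the inclusion-reversing bijection $I\mapsto I^c:=\at(P)\setminus I$ on such interpretations, together with a rule-by-rule translation that mirrors this complementation at the level of satisfaction. First dispose of a degenerate case: if $P$ contains the rule $\bot\la$ (a constraint with empty body, which is dual-Horn), then no interpretation satisfies it, so $P$ has no model and we are done. Otherwise, for each $r\in P$ define its \emph{dual} $r^d$ by: if $r=(a_1\vee\cdots\vee a_l\la b)$ is proper with $l\ge 1$ and a single positive body atom $b$, set $r^d:=(b\la a_1,\ldots,a_l)$; if $r=(a_1\vee\cdots\vee a_l\la)$ has empty body, set $r^d:=(\bot\la a_1,\ldots,a_l)$; and if $r=(\bot\la b)$ is a constraint with a single body atom, set $r^d:=b$ (the fact $b$). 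In each case $r^d$ is normal with $B^-(r^d)=\emptyset$, hence Horn, and $\at(r^d)=\at(r)$, so $P^d:=\{\,r^d\mid r\in P\,\}$ is a Horn program with $\at(P^d)=\at(P)$.

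The key step is to check that for every interpretation $I\subseteq\at(P)$ and every $r\in P$,
\[
  I\models r \quad\Longleftrightarrow\quad I^c\models r^d .
\]
This is a short case distinction over the three forms of $r$, using only that $a\in I \iff a\notin I^c$ for $a\in\at(P)$: "$H(r)\cap I=\emptyset$" becomes "$H(r)\subseteq I^c$", and "$b\in B^+(r)\setminus I$" becomes "$b\in I^c$", which are exactly the parts of the satisfaction condition for the corresponding Horn rule $r^d$. Consequently $I\models P$ iff $I^c\models P^d$. Since $I\mapsto I^c$ is an inclusion-reversing bijection on the subsets of $\at(P)$, maximal models of $P$ correspond precisely to minimal models of $P^d$; a least model of $P^d$ is in particular its unique minimal model. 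Hence by the stated property of Horn programs, either $P^d$ has no model, so $P$ has no model, or $P^d$ has a unique least model $N$, and then $N^c$ is the unique maximal model of $P$.

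I do not expect a genuine obstacle; the only points needing care are the bookkeeping for the degenerate constraint $\bot\la$ and for constraints with one body atom (which turn into facts, not constraints, under the duality), and fixing that interpretations range over $\at(P)$ so that complementation is well defined and maximality/leastness are measured in the same finite universe. An alternative, self-contained route that avoids invoking the Horn case is to show directly that the models of a dual-Horn program are closed under unions: for any rule and any family of models, if the union violated the rule then its single positive body atom would already belong to one member of the family whose intersection with the head is empty, contradicting that this member is a model; therefore, whenever $P$ has a model at all, the union of all its models is again a model and is thus the unique maximal one.
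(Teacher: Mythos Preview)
Your proof is correct. Both you and the paper exploit the same underlying duality---reversing heads and positive bodies turns a dual-Horn program into a Horn one, and complementation of interpretations over the fixed universe $\at(P)$ swaps maximal with least models---so conceptually the two arguments coincide.

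The packaging differs, however. The paper does not dualize $P$ directly; it first normalizes to $P[t]$ (introducing a fresh atom~$t$ so that every rule has a non\-empty positive body), then builds the elimination sets $E_i$ and observes that this iteration is precisely the least-model computation of the reversed definite Horn program $P'[t]=\{b\la H\mid H\la b\in P[t]\}$. Proposition~\ref{prop:3} then falls out of Proposition~\ref{prop:4}(3)--(4). Your route avoids the $P[t]$ detour by handling empty positive bodies and the two constraint shapes explicitly via case analysis. For the bare statement of Proposition~\ref{prop:3} your argument is cleaner; the paper's extra machinery ($P[t]$ and the $E_i$ sets) is there because it is reused verbatim in Corollary~\ref{cor:1} and in the \SAT translation, where an explicit level-by-level construction of the maximal model is needed.

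Your alternative closure-under-unions sketch is also fine, with one small caveat: the phrase ``its single positive body atom'' tacitly assumes $|B^+(r)|=1$, but dual-Horn rules may have $B^+(r)=\emptyset$. That case is even easier (each $I_\alpha$ must meet $H(r)$, hence so does the union), but it should be mentioned if you flesh out that variant.
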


We will now describe a construction that implies this result and is
also of use in arguments later in the paper.  

  Let us define $E_0=\emptyset$ and, for $i\geq 1$,
  \begin{align*}
    E_{i} = \{b\mid H \leftarrow b \in P[t], \ H\subseteq E_{i-1}\}.
  \end{align*}

Intuitively, the sets~$E_i$ consist of atoms that \emph{must not}
be in any model of $P[t]$ (must be eliminated). The construction 
is dual to that for Horn programs. More precisely, the sets~$E_i$ 
can be alternatively defined as the results of recursively applying 
to $E_0=\emptyset$ the one-step provability operator for the \emph{definite
Horn} program~$P'[t]= \{b \la H\mid H\la b \in P[t]\}$. The following result
summarizes properties of the program~$P[t]$ and sets~$E_i$.

\begin{proposition}\label{prop:4}
  Let $P$ be dual-Horn. Then,
  \begin{enumerate}
  \item $E_0\subseteq E_1 \subseteq \ldots \subseteq \at(P)\cup\{t\}$;
  \item $(\at(P) \cup\{t\})\setminus \bigcup_{i=0}^\infty E_i$ is a
    {maximal} model (over $\at(P) \cup\{t\}$) of $P[t]$;
  \item for every set~$M$ of atoms, $M$ is a model of $P$ if and only
    if $M\cup \{t\}$ is a model of $P[t]$; and
  \item $P$ has a model if and only if $t$ belongs to the maximal
    model (over $\at(P) \cup\{t\}$) of $P[t]$\\ (or, equivalently,
    $t\notin \bigcup_{i=0}^\infty E_i$).
  \end{enumerate}
\end{proposition}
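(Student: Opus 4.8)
The plan is to treat the four claims together, exploiting the observation already made in the text that the sets $E_i$ are exactly the iterates of the one-step provability operator for the definite Horn program $P'[t] = \{b \la H \mid H \la b \in P[t]\}$. First I would prove item~1 by induction on~$i$: $E_0 = \emptyset \subseteq E_1$ trivially, and if $E_{i-1} \subseteq E_i$ then for any rule $H \la b \in P[t]$ with $H \subseteq E_{i-1}$ we also have $H \subseteq E_i$, so $E_i \subseteq E_{i+1}$. Each $E_i \subseteq \at(P) \cup \{t\}$ because every atom placed into some $E_i$ occurs as a body atom of a rule of $P[t]$, and by construction of $P[t]$ all atoms of $P[t]$ lie in $\at(P) \cup \{t\}$. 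Since the chain is increasing and bounded by the finite set $\at(P) \cup \{t\}$, the union $E_\infty := \bigcup_i E_i$ is reached after finitely many steps and is the least fixpoint of the $P'[t]$-provability operator.

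For item~3, I would check the two directions directly from the definition of $\models$. Note that $P[t]$ is obtained from $P$ by adding the fresh atom $t$ to the positive body of every rule whose positive body was empty, and leaving the other rules untouched. If $M \models P$, then adding $t$ to $M$ cannot falsify any rule: rules originally with non-empty positive body are unaffected (their heads/negative bodies/positive bodies involve only atoms of $\at(P)$, and $M$ already satisfied them), and for a rule $r$ with $B^+(r) = \emptyset$, whose modified version has positive body $\{t\}$, satisfaction of the modified rule by $M \cup \{t\}$ requires $(H(r) \cup B^-(r)) \cap (M\cup\{t\}) \neq \emptyset$, which follows from $M \models r$ since $t$ is fresh and hence not in $H(r) \cup B^-(r)$. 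Conversely, if $M \cup \{t\} \models P[t]$, then for each $r \in P$ with $B^+(r) = \emptyset$ the modified rule has $t$ in its (now singleton) positive body, which $M\cup\{t\}$ does not falsify, so the head-or-negative-body disjunct holds, giving $M \models r$; rules with non-empty positive body transfer back unchanged. (Freshness of $t$ ensures $M = (M\cup\{t\}) \cap \at(P)$ carries no spurious information.)

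Item~2 is the heart of the argument, and I would prove it in two halves. First, $N := (\at(P) \cup \{t\}) \setminus E_\infty$ is a model of $P[t]$: take any rule $H \la b \in P[t]$ (by dual-Hornness every rule of $P[t]$ has exactly one positive body atom and no negative body, so it has this shape with $b \in \at(P)\cup\{t\}$ and $H \subseteq \at(P)$). If $b \notin N$ then the rule is satisfied vacuously. If $b \in N$, suppose for contradiction $H \cap N = \emptyset$, i.e.\ $H \subseteq E_\infty$; then $H \subseteq E_i$ for some $i$, so by definition $b \in E_{i+1} \subseteq E_\infty$, contradicting $b \in N$. Hence $H \cap N \neq \emptyset$ and $N \models H \la b$. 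Second, $N$ is maximal among models of $P[t]$ over $\at(P)\cup\{t\}$: I claim every atom of $E_\infty$ is excluded from every model, by induction on the stage at which it enters $E_\infty$. Suppose $b \in E_i$, witnessed by a rule $H \la b \in P[t]$ with $H \subseteq E_{i-1}$, and let $M'$ be any model of $P[t]$ over $\at(P)\cup\{t\}$; by the induction hypothesis $M' \cap E_{i-1} = \emptyset$, so $M' \cap H = \emptyset$, and since $M' \models H\la b$ this forces $b \notin M'$. Thus $M' \subseteq \at(P)\cup\{t\} \setminus E_\infty = N$ for every such $M'$, so $N$ is the (unique) maximal model. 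This simultaneously yields Proposition~\ref{prop:3} via item~3, since the maximal models of $P$ over $\at(P)$ correspond to the maximal models of $P[t]$ containing $t$, and if no model of $P[t]$ contains $t$ then $P$ has no model.

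Finally, item~4 follows by combining items~2 and~3: $P$ has a model iff (by item~3) $P[t]$ has a model containing $t$, iff $t$ belongs to the maximal model $N$ of $P[t]$ over $\at(P)\cup\{t\}$ (for the forward direction, any model $M'$ of $P[t]$ with $t \in M'$ satisfies $M' \subseteq N$ by the maximality argument above, so $t \in N$; the converse is immediate since $N$ is itself a model), iff $t \notin E_\infty$. The main obstacle I anticipate is the maximality half of item~2 — specifically, getting the induction on $E_i$-stages exactly right and being careful that, because $P[t]$ rules are processed as definite Horn rules $b \la H$ of $P'[t]$, the "elimination" semantics is genuinely dual to the standard least-model semantics; the rest is routine bookkeeping about the fresh atom $t$.
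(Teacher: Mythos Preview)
The paper does not actually give a proof of Proposition~4; it states the result and relies on the remark that the $E_i$ are the iterates of the one-step provability operator for the definite Horn program $P'[t]=\{b\la H\mid H\la b\in P[t]\}$, leaving the verification implicit. Your proposal fills in precisely those details, and the argument is correct: monotonicity of the $E_i$ by induction, model-hood of the complement $N$ via a fixpoint contradiction, maximality by showing inductively that every model of $P[t]$ avoids each $E_i$, and items~3 and~4 by direct bookkeeping with the fresh atom~$t$. This is exactly the natural route the paper's duality remark points to, so there is no meaningful methodological difference to discuss.

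One minor clean-up: since $P$ is dual-Horn, $B^-(r)=\emptyset$ for every rule, so all your references to negative bodies in the argument for item~3 are vacuous. Dropping them simplifies the write-up without changing anything. Everything else goes through as written.
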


Properties~(3) and~(4) imply Proposition~\ref{prop:3}. The construction can be
implemented to run in linear time by means of the algorithm
by~\citex{DowlingGallier84} for computing the least model of a Horn
program. 

The {\em Gelfond-Lifschitz reduct}~$P^I$ of a program~$P$ {\em
  relative to\/} an interpretation~$I$ is defined as
$P^I = \{ \head{r}\la \bodyp{r} \mid r\in P,I\cap \bodyn{r} =
\emptyset\}$.
Observe that for a dual-normal program~$P$ any reduct of~$P_r$ is
dual-Horn.  An interpretation~$I$ is an \emph{answer set} of a
program~$P$ if $I$ is a minimal model of
$P^I$~\cite{gelf-lifs-91,przy-91}.  The set of all answer sets of a
program $P$ is denoted by $\AS(P)$.

The following well-known characterization of answer sets is often
invoked when considering the complexity of deciding the existence of
answer sets. 

\begin{proposition}\label{prop:1}
  The following statements are equivalent for any program $P$ and any
  set $M$ of atoms: 
  \begin{enumerate}
  \item $M\in \AS(P)$,
  \item $M$ is a model of $P$ and a minimal model of $P_r^M$, and
  \item $M$ is a model of~$P_c$ and $M\in\AS(P_r)$.
  \end{enumerate}
\end{proposition}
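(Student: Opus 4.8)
The plan is to derive all three equivalences directly from the definition of an answer set (a minimal model of the reduct $P^M$), using two elementary observations. The first is the standard fact that $M\models P$ \ifof $M\models P^M$: deleting the rules whose negative body meets $M$ and stripping the remaining (satisfied) negated literals does not affect whether $M$ is a model. The second is that $P^M=P_r^M\cup P_c^M$, where $P_c^M$ consists entirely of constraints of the form $\bot\la \bodyp{r}$ (the reduct has removed all negated literals), so that $N\models \bot\la\bodyp{r}$ \ifof $\bodyp{r}\not\subseteq N$; consequently, if $M\models P_c^M$ then $N\models P_c^M$ for every $N\subseteq M$.

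For (1)$\Leftrightarrow$(2): suppose $M\models P^M$, so $M\models P_r^M$ and $M\models P_c^M$. For any $N\subsetneq M$ we then have $N\models P^M$ \ifof $N\models P_r^M$, since the $P_c^M$ part holds automatically by the closure property above. Hence $M$ is a minimal model of $P^M$ \ifof $M\models P^M$ and no $N\subsetneq M$ is a model of $P_r^M$, i.e.\ \ifof $M\models P$ (by the first observation) and $M$ is a minimal model of $P_r^M$. This is exactly statement~(2).

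For (2)$\Leftrightarrow$(3): since $M\models P$ \ifof $M\models P_r$ and $M\models P_c$, statement~(2) asserts $M\models P_c$ together with the conjunction ``$M\models P_r$ and $M$ is a minimal model of $P_r^M$''. Now apply the equivalence (1)$\Leftrightarrow$(2) that was just established, but to the constraint-free program $P_r$ (for which $(P_r)_c=\emptyset$ and $(P_r)_r=P_r$): the bracketed conjunction is precisely $M\in\AS(P_r)$. This yields statement~(3), completing the chain.

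The only real subtlety — and the step I would be most careful with — is the constraint handling: one must not argue directly with $P_c$, since a subset of a model of $P_c$ need not be a model of $P_c$ (a negative body literal may cease to hold). Working instead with $P_c^M$, in which the negative bodies have already been evaluated against $M$, is what makes the downward-closure argument go through, and this is the point where a naive proof would break down.
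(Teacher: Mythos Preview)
Your proof is correct. Note, however, that the paper does not actually supply a proof of this proposition: it is introduced as a ``well-known characterization of answer sets'' and stated without argument. Your write-up is a clean, self-contained derivation of exactly this folklore fact; the key insight you isolate --- that the reduct $P_c^M$ is downward closed under subsets because its rules are positive constraints, whereas $P_c$ itself is not --- is precisely the point that makes the argument work and is worth flagging, as you do.
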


This result identifies testing whether an interpretation~$M$ is a
minimal model of $P_r^M$ as the key task in deciding whether $M$ is an
answer set of $P$. For normal programs checking that $M$ is a minimal
model of $P_r^M$ is easy. One just needs to compute the least model of
the Horn program~$P_r^M$ and check whether it matches $M$.  The
general case requires more work. A possible approach is to reduce the
task to that of deciding whether certain programs derived from $P_r^M$
have models.
Specifically, define for a program~$P$ and an atom $m \in \at(P)$
\begin{align*}
  \pmm{P}{M}{m} =
  P_r^M \cup \{ \bot \la b \mid b \in \at(P) \setminus M\} \cup \{ \bot \la m\}.
\end{align*}
With this notation, we can restate Condition~(2) in
Proposition~\ref{prop:1}.

\begin{proposition}\label{prop:2}
  An interpretation~$M$ is an answer set of a program~$P$ if and only
  if $M$ is a model of $P$ and for each~$m\in M$, the
  program~$\pmm{P}{M}{m}$ has no models.
\end{proposition}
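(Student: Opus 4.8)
The plan is to reduce Proposition~\ref{prop:2} to Proposition~\ref{prop:1}. By statement~(2) of Proposition~\ref{prop:1}, $M\in\AS(P)$ iff $M$ is a model of $P$ and a minimal model of $P_r^M$; since both sides of the equivalence we must prove already contain the clause ``$M$ is a model of $P$'', it suffices to show, under the standing hypothesis $M\models P$, that $M$ is a minimal model of $P_r^M$ if and only if $\pmm{P}{M}{m}$ has no model for every $m\in M$. As a preliminary I would record the routine fact that $M\models P$ already forces $M\models P_r^M$: any rule of $P_r^M$ has the form $H(r)\la B^+(r)$ with $r\in P_r$ and $M\cap B^-(r)=\emptyset$, and then $M\models r$ gives $H(r)\cap M\neq\emptyset$ or $B^+(r)\setminus M\neq\emptyset$, i.e.\ $M\models H(r)\la B^+(r)$. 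Hence $M$ is a model of $P_r^M$, and only its minimality is in question.

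For the direction from non-minimality to satisfiability I would take a model $N$ of $P_r^M$ with $N\subsetneq M$, pick any $m\in M\setminus N$, and verify directly that $N\models\pmm{P}{M}{m}$: it satisfies $P_r^M$ by choice, it satisfies every constraint $\bot\la b$ with $b\in\at(P)\setminus M$ because $N\subseteq M$, and it satisfies $\bot\la m$ because $m\notin N$. For the converse, assume $N\models\pmm{P}{M}{m}$ for some $m\in M$; the constraints force $N\cap\at(P)\subseteq M$ and $m\notin N$, while $N\models P_r^M$. Since no rule of $P_r^M$ mentions an atom outside $\at(P)$, the set $N\cap\at(P)$ is still a model of $P_r^M$, it is contained in $M$, and it omits $m\in M$; thus it is a proper subset of $M$, so $M$ is not a minimal model of $P_r^M$. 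Chaining these two implications together with Proposition~\ref{prop:1} yields the statement.

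The part needing the most care — and the only real candidate for an ``obstacle'' — is the bookkeeping about the atom universe. One has to be explicit that models of $P_r^M$ (and of the programs $\pmm{P}{M}{m}$) may be intersected with $\at(P)$ without affecting satisfaction, which is what legitimizes the restriction step in the converse direction; and, if the convention allows interpretations to contain atoms outside $\at(P)$, one should note that such an $M$ is never a minimal model of $P_r^M$ and that for any such extraneous atom $m$ the set $M\setminus\{m\}$ is already a model of $\pmm{P}{M}{m}$, so both sides of the equivalence fail together and the claim is preserved. Everything else is a straightforward unfolding of the definitions of the reduct, of satisfaction of constraints, and of inclusion-minimality.
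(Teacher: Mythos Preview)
Your proposal is correct and follows exactly the route the paper indicates: the paper presents Proposition~\ref{prop:2} as a restatement of Condition~(2) in Proposition~\ref{prop:1} and gives no explicit proof, so your argument simply supplies the routine details of that restatement. Your care about the atom universe is appropriate but not something the paper addresses, since it implicitly works over $\at(P)$ throughout.
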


Clearly, if a program $P$ is dual-normal, the programs $\pmm{P}{M}{m}$ 
all are dual-Horn. Combining Propositions~\ref{prop:4} and \ref{prop:2} 
yields the following corollary, as well as an efficient algorithm for 
checking whether $M$ is an answer set of $P$.

\begin{corollary}
\label{cor:1}
Let $P$ be a dual-normal program. An 
interpretation $M$ is an answer set of $P$ if and only if $M$ is a model 
of $P$ and for every $m\in M$, $t_m\in \bigcup_{i=0}^\infty E_i$, where $E_i$
are the sets computed based on $\pmm{P}{M}{m}[t_m]$.
\end{corollary}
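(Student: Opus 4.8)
The plan is to simply compose Proposition~\ref{prop:2} with Proposition~\ref{prop:4} applied to the right family of programs. Fix a dual-normal program $P$ and an interpretation $M$ that is a model of $P$. By Proposition~\ref{prop:2}, $M\in\AS(P)$ \ifof for every $m\in M$ the program $\pmm{P}{M}{m}$ has no models. So it suffices to show, for each fixed $m\in M$, that $\pmm{P}{M}{m}$ has no models \ifof $t_m\in\bigcup_{i=0}^\infty E_i$, where the $E_i$ are built from $\pmm{P}{M}{m}[t_m]$ and $t_m$ is a fresh atom (one per $m$, to keep the constructions independent).

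First I would observe that $\pmm{P}{M}{m}$ is dual-Horn: its rules come from $P_r^M$, which is dual-Horn because $P$ is dual-normal (noted just before Proposition~\ref{prop:1} / in the remark after the reduct definition), together with the constraints $\bot\la b$ for $b\in\at(P)\setminus M$ and $\bot\la m$, all of which are positive with $\Card{B^+}\le 1$ and $B^-=\emptyset$; this is exactly the observation recorded in the sentence preceding the corollary. Hence Proposition~\ref{prop:4} applies to $Q:=\pmm{P}{M}{m}$ with the fresh atom $t:=t_m$. Part~(4) of Proposition~\ref{prop:4} states that $Q$ has a model \ifof $t_m\notin\bigcup_{i=0}^\infty E_i$, equivalently $Q$ has \emph{no} model \ifof $t_m\in\bigcup_{i=0}^\infty E_i$. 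Chaining this equivalence into Proposition~\ref{prop:2}, $M$ is an answer set of $P$ \ifof $M\models P$ and for every $m\in M$ we have $t_m\in\bigcup_{i=0}^\infty E_i$, which is the claimed statement.

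The only genuinely non-routine point — and the one I would be careful to spell out — is that the $E_i$ sets in the corollary are those for $\pmm{P}{M}{m}[t_m]$, not for $\pmm{P}{M}{m}$ itself; but this is precisely the input for which Proposition~\ref{prop:4} was stated (it speaks of $P[t]$ and the derived $E_i$), so no gap arises provided we read ``$P$'' in Proposition~\ref{prop:4} as instantiated by $\pmm{P}{M}{m}$. A minor bookkeeping remark worth including is that the atoms $t_m$ must be chosen mutually distinct and fresh w.r.t.\ $\at(P)$, so that the constructions for different $m$ do not interfere; since $m$ ranges over the finite set $M$, this is unproblematic. Everything else is a direct substitution, so I expect no real obstacle — the statement is essentially a repackaging of Propositions~\ref{prop:2} and~\ref{prop:4} tailored to the dual-normal case, and its value is the resulting linear-time algorithm (invoke the Dowling--Gallier procedure once per $m\in M$ on $P'[t_m]$) rather than any new difficulty in the proof.
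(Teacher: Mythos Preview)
Your proposal is correct and follows exactly the approach the paper indicates: the text just before the corollary says that combining Propositions~\ref{prop:4} and~\ref{prop:2} yields the result, and you spell out precisely that combination, including the observation that $\pmm{P}{M}{m}$ is dual-Horn so that Proposition~\ref{prop:4}(4) applies with $t:=t_m$.
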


\section{Translation into \SAT} 

In this section, we encode dual-normal programs as propositional
formulas so that the models of the resulting formulas encode the
answer sets of the original programs. The main idea is to
non-deterministically check for every interpretation whether it is an
answer set of~$P$. In other words, we encode into our formula a guess
of an interpretation and the efficient algorithm described above to
check whether it has models (cf. Corollary~\ref{cor:1}). Note that
the latter part is dual to the Horn encoding
by~\citex{FichteSzeider13}.

Let $P$ be a program and $p=\Card{\at(P)}$. The propositional
variables in our encodings are given by all atoms~$a \in \at(P)$, a
fresh variable~$t$, and fresh variables~$a^i_m$, for
$a\in \at(P)\cup\{t\}$, $m\in \at(P)$, and $0 \leq i \leq p$. We use
the variables~$a^i_m$ and $t^i_m$ to simulate the computation of
$\bigcup_{i=0}^\infty E_i$ based on the program $\pmm{P}{M}{m}[t_m]$,
when testing minimality of an interpretation~$M$ by trying to exclude
$m$ (cf. Corollary~\ref{cor:1}). The superscript~$i$ generates copies
of atoms that represent the set~$E_i$. Moreover, we write
$P \sqcap B$ as a shorthand for $\{r \in P \mid B^+(r)=B\}$
and we write ${\pmm{E_i}{M}{m}}$ to indicate that a set~$E_i$ is
considered with respect to $\pmm{P}{M}{m}[t_m]$ instead of $P[t_m]$.

The following auxiliary formulas simulate, according to
Corollary~\ref{cor:1}, an inductive top-down computation of the
maximal models of~$\pmm{P}{M}{m}[t_m]$, where $M$ is an interpretation
and $m \in M$. Since $\pmm{P}{M}{m}[t_m]$ is dual-Horn the main part
of our first auxiliary formulas is the encoding of the
set~$(\at(P)\cup \{t_m\}) \setminus \bigcup_{i=0}^\infty
{\pmm{E_i}{M}{m}}$
where $m \in M$ and $0 \leq i \leq p$ (cf. Proposition~\ref{prop:4}
Properties~(1) and (2)).

For the initial level~0, the following formula~$F^0_m$ encodes
${\pmm{E_0}{M}{m}}$. That is, it ensures that $m$ does not belong to a
model of $F^0_m$ and all other variables belong to a model of $F^0_m$
if and only if they do for the current interpretation over~$\at(P)$:
\begin{align*}
  \F{0}_m =& \neg m^{0}_m \wedge t^0_m \wedge \bigwedge_{a
             \in \at(P)\setminus\{m\}} (a^{0}_m \leftrightarrow a).
\end{align*}

The next formula encodes the
set~$(\at(P)\cup \{t_m\})\setminus {\pmm{E_i}{M}{m}}$.  In other
words, we ensure that an atom~$a$ does not belong to the model if and
only if there is a rule~$r \in \pmm{P}{M}{m}[t_m]$ where already all
atoms in the head do not belong to the model (according to the
previous step), and analogously for $t^i_m$:
\begin{align*}
  F^{i}_m =& \bigwedge_{a \in \at(P) \setminus \{m\}} \big(
             a^{i}_m \leftrightarrow (a^{i-1}_m \wedge C^i_m(P_r \sqcap\{a\})) \big)
             \wedge
             \big( t^i_m \leftrightarrow ( t^{i-1}_m \wedge
             C^i_m(P_r \sqcap \emptyset))\big)
\end{align*}
\begin{align*}
  (\text{for } 1\leq i \leq p)\quad \text{ where } 
  C^i_m(R) =& \bigwedge_{r \in R} \big( \bigvee_{a\in H(r)}
              a^{i-1}_m \vee \bigvee_{a\in B^-(r)} a\big).
\end{align*}
Note that in $C^i_m(R)$ the heads are evaluated with respect to the
previous level while the negative bodies are evaluated with respect to
the current model candidate, thus simulating the concept of reduct
inherent in $\pmm{P}{M}{m}[t_m]$.

Finally, the following auxiliary formula encodes the condition that an
interpretation satisfies each rule~$r \in P$:
\begin{align*}
  \F{Mod} = \bigwedge_{r \in P} \Big( \bigvee_{a\in H(r)\cup B^-(r)}
  a \vee \bigvee_{a\in B^+(r)} \neg a \Big).
\end{align*}
We now put these formulas together to obtain a formula~$F(P)$
expressing that some interpretation $M\subseteq \at(P)$ is a model of
$P$ and for every atom $a\in M$, atom~$t_a$ does not belong to the
maximal model of~$\pmm{P}{M}{a}[t_a]$:
\begin{align*}
  F(P) = \F{Mod} \wedge \bigwedge_{a \in \at(P)} \Big[a
  \rightarrow \Big( \bigwedge_{i=0}^p F^i_a \wedge \neg t^p_a
  \Big)\Big].
\end{align*}
It is easy to see that the formula~$F(P)$ is of
size~$O(\CCard{P} \cdot \Card{\at(P)}^3)$, where $\CCard{P}$ stands
for the size of $P$, and obviously we can construct it in polynomial
time from $P$. The correctness of the translation is formally stated
in the following result.

\begin{theorem}
  Let $P$ be a dual-normal program.  Then,
  $\AS(P)= \{ M \cap \at(P) \mid M\in\Mod(F(P))\}$, where $\Mod(F)$
  denotes the set of all models of $F$.
\end{theorem}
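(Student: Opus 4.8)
The plan is to prove the two inclusions
$\AS(P)\subseteq\{M\cap\at(P)\mid M\in\Mod(F(P))\}$ and
$\{M\cap\at(P)\mid M\in\Mod(F(P))\}\subseteq\AS(P)$
by exhibiting, for each answer set, a canonical model of $F(P)$, and, conversely,
showing that any model of $F(P)$ restricted to $\at(P)$ satisfies the characterization in Corollary~\ref{cor:1}.
The bridge throughout is the observation that the auxiliary variables $a^i_m$ and $t^i_m$ are \emph{functionally determined} by the values of the original atoms: given a truth assignment to $\at(P)$, the conjunct $F^0_m$ forces the level-$0$ copies, and then each $F^i_m$, $1\le i\le p$, forces the level-$i$ copies from the level-$(i-1)$ ones. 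So for any fixed $M\subseteq\at(P)$ there is exactly one assignment to all variables extending (the characteristic function of) $M$ that can possibly satisfy $\bigwedge_{i=0}^p F^i_a$ for all $a$, and it satisfies these conjuncts unconditionally.

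The first key step is a lemma identifying this forced extension with the
$E_i$ construction. Concretely: fix $m\in M\subseteq\at(P)$, and let the $E_i$ be the sets computed from $\pmm{P}{M}{m}[t_m]$ as in Corollary~\ref{cor:1}. I claim that in the forced extension, $a^i_m$ is true iff $a\in(\at(P)\cup\{t_m\})\setminus\bigcup_{j\le i}\pmm{E_j}{M}{m}$ (and likewise for $t^i_m$). This is proved by induction on $i$. The base case $i=0$ is immediate from $F^0_m$, since $E_0=\emptyset$ and $F^0_m$ sets $m^0_m$ false (recording $m\in\bigcup E_j$ — note that $m$ is the atom we put the constraint $\bot\la m$ on), $t^0_m$ true, and matches every other $a^0_m$ to $a$. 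For the inductive step, unwind the definition of $E_i$: an atom $b$ enters $E_i$ iff there is a rule $H\la b$ in $\pmm{P}{M}{m}[t_m]$ with $H\subseteq E_{i-1}$; translating through the $[t_m]$ transformation and the reduct, this rule comes either from a proper rule $r$ of $P$ with $B^+(r)=\{b\}$ and $I\cap B^-(r)=\emptyset$, whose head is entirely "eliminated" by level $i-1$, or (when $b=t_m$) from a rule with $B^+(r)=\emptyset$ similarly reduced — and this is precisely what the biconditional in $F^i_m$, together with the shape of $C^i_m(P_r\sqcap\{a\})$ and $C^i_m(P_r\sqcap\emptyset)$, encodes. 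One has to check carefully that $C^i_m$ reads heads at level $i-1$ and negative bodies against $M$ itself (the comment in the text after the definition of $C^i_m$ flags exactly this point), and that the constraints $\bot\la b$ for $b\in\at(P)\setminus M$ and $\bot\la m$ added in $\pmm{P}{M}{m}$ correspond to the $\neg m^0_m$ conjunct plus the fact that those $b$ already fail to be in $M$; since $F^0_m$ matches $a^0_m$ to $a$, such $b$ start out eliminated, consistently with those constraints being present. After $p=\Card{\at(P)}$ steps the sequence $E_0\subseteq E_1\subseteq\cdots$ has stabilized (Proposition~\ref{prop:4}(1) bounds it inside $\at(P)\cup\{t_m\}$, which has $p+1$ elements, so at most $p$ strict inclusions), hence $t^p_m$ is true iff $t_m\notin\bigcup_{j=0}^\infty\pmm{E_j}{M}{m}$, i.e.\ by Proposition~\ref{prop:4}(4) iff $\pmm{P}{M}{m}$ has a model.

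With this lemma the theorem falls out. Given the forced extension of $M\subseteq\at(P)$: it satisfies $\F{Mod}$ iff $M\models P$ (direct inspection of $\F{Mod}$ against the model condition $I\models r$); and for each $a\in\at(P)$ it satisfies $a\to(\bigwedge_i F^i_a\wedge\neg t^p_a)$ iff, whenever $a\in M$, we have $\neg t^p_a$, which by the lemma means $\pmm{P}{M}{a}$ has no model. So the forced extension is a model of $F(P)$ iff $M\models P$ and for every $a\in M$ the program $\pmm{P}{M}{a}$ has no model — which is exactly Proposition~\ref{prop:2}'s characterization of $M$ being an answer set of $P$. For the forward inclusion, given $M\in\AS(P)$ take the forced extension; it is a model of $F(P)$ and restricts to $M$ on $\at(P)$. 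For the reverse inclusion, let $N\in\Mod(F(P))$ and set $M=N\cap\at(P)$; since $F^0_a,\dots,F^p_a$ functionally pin down all auxiliary variables from $M$, $N$ must equal the forced extension of $M$, whence the equivalence above gives $M\in\AS(P)$.

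I expect the main obstacle to be the inductive step of the bridging lemma, specifically the bookkeeping around the $[t_m]$ transformation composed with the reduct and with the added constraints in $\pmm{P}{M}{m}$: one must verify that the partition $P_r\sqcap\{a\}$ versus $P_r\sqcap\emptyset$ in $F^i_m$, after applying $[t_m]$, correctly reproduces the definite-Horn program $P'[t_m]=\{b\la H\mid H\la b\in P[t_m]\}$ whose one-step operator generates the $E_i$ — and that the negated-body atoms appearing \emph{un-primed} in $C^i_m$ faithfully simulate the reduct $\pmm{P}{M}{m}^M_r$ rather than being confused with the level-$(i-1)$ copies. Everything else — the $\F{Mod}$ check, the stabilization after $p$ steps, the functional-determinacy argument — is routine.
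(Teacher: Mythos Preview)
The paper states this theorem without formal proof; the preceding text only explains what each subformula is meant to encode, and your proposal is a faithful formalization of that outline via the bridging lemma tying $a^i_m$ to the sets $E_i$ and the appeal to Corollary~\ref{cor:1}. So the approach is the intended one.

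There is, however, a genuine gap in the functional\hyp{}determinacy step. You claim that the conjuncts $F^0_m,\ldots,F^p_m$ pin down \emph{all} auxiliary variables from $M$, and hence that every model $N$ of $F(P)$ must equal ``the forced extension'' of $N\cap\at(P)$. This fails in two places. First, for $a\notin M$ the implication $a\to(\cdots)$ in $F(P)$ is vacuous, so the entire block of variables subscripted by such $a$ is unconstrained; you only need the block for $a\in M$, but you should say so. Second and more subtly, look again at the definition of $F^i_m$: the big conjunction ranges over $a\in\at(P)\setminus\{m\}$, so $m^i_m$ for $i\geq 1$ is \emph{not} constrained, yet it may occur in $C^{i+1}_m(P_r\sqcap\{a\})$ whenever $m\in H(r)$ for some rule $r$ with $B^+(r)=\{a\}$. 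Hence even for $m\in M$ there is no unique forced extension, and your reverse\hyp{}inclusion argument as written does not go through.

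The repair is short. For the forward inclusion, simply set $m^i_m$ false for all $i\geq 1$ (and give arbitrary values to the $a\notin M$ blocks) in your witness; with that choice your induction establishes the level\hyp{}shifted identity $\{a:a^i_m\ \text{false}\}=E_{i+1}$, and $t_m\in\bigcup_i E_i=E_{p+1}$ gives $\neg t^p_m$. For the reverse inclusion, replace ``$N$ equals the forced extension'' by a one\hyp{}sided bound: prove by induction on $i$ that, whatever values $N$ assigns to the free $m^i_m$, one has $\{a:a^i_m\ \text{false in }N\}\subseteq E_{i+1}$ (setting some $m^i_m$ true can only make $C^{i+1}_m$ easier to satisfy, hence eliminates \emph{fewer} atoms). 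Then $\neg t^p_m$ still forces $t_m\in E_{p+1}=\bigcup_i E_i$, and Corollary~\ref{cor:1} finishes the job.
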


Our encoding can be improved by means of an explicit encoding of the
induction levels using counters (see~e.g.,~\cite{Janhunen06}). This
allows to reduce the size of the encoding to
$O(\Card{\at(P)} \cdot \CCard{P}\cdot \log \Card{\at(P)})$.

\section{Translation into Normal Programs}
We now provide a polynomial-time translation from programs to programs
that allows us to swap heads with positive bodies. It serves several
purposes. (1)~The translation delivers a normal program when the input
program is dual-normal, and it delivers a dual-normal program when the
input is normal.  Given the complexity results
by~\citex{Truszczynski11}, the existence of such translations is not
surprising. However, the fact that there exists a \emph{single}
bidirectional translation, not tailored to any specific program class,
is interesting.  (2)~When applied to head-cycle free
programs~\cite{Ben-EliyahuD94}, the translation results in programs
that we call \emph{body-cycle} free.  Body-cycle free programs are in
many respects dual to head-cycle free ones.

To proceed, we need one more technical result which provides yet
another characterization of answer sets of programs. It is closely
related to the one given by Corollary~\ref{cor:1} but more convenient
to use when analyzing the translation we give below.  Let $P$ be a
program and $t$ a fresh atom. For every pair of atoms~$x,y$, where
$x\in \at(P)$ and $y\in \at(P)\cup \{t\}$ we introduce a fresh
atom~$y_x$, as an auxiliary atom representing a copy of~$y$ in $P$
with respect to $x$; we clarify the role of these atoms below after
the proof of Proposition \ref{prop:5}.

Moreover, for every set~$Y\subseteq \at(P)\cup\{t\}$, let
$Y_x=\{y_x\mid y\in Y\}$. With this notation in hand, we define
\begin{align*}
  P_x = \{B^{+}_x\leftarrow H_x, \neg B^{-}\mid 
  H\leftarrow B^{+},\neg B^{-}\in P_r[t]\},
\end{align*}
and we write $P^M_x$ for $(P^M)_x$ and $P^M_r$ for $(P^M)_r=(P_r)^M$.

\begin{proposition}\label{prop:5}
  Let $P$ be a program. An interpretation~$M\subseteq \at(P)$ is an
  answer set of $P$ if and only if $M$ is a model of $P$, and for
  every~$x\in M$, $t_x$ belongs to every minimal model of
  $P_x^M\cup\{x_x\}\cup (\at(P)\setminus M)_x$.
\end{proposition}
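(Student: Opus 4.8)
The plan is to reduce the statement to Proposition~\ref{prop:2}, by which $M$ is an answer set of $P$ iff $M\models P$ and, for every $x\in M$, the program $\pmm{P}{M}{x}$ has no model. Since the conjunct $M\models P$ is shared with the claim, it suffices to prove, \emph{for each fixed} $x\in M$, the equivalence
\[
  \pmm{P}{M}{x}\ \text{has no model}\qquad\Longleftrightarrow\qquad
  t_x\in N\ \text{for every minimal model}\ N\ \text{of}\ Q_x,
\]
where $Q_x$ abbreviates $P^M_x\cup\{x_x\}\cup(\at(P)\setminus M)_x$.

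The heart of the matter is that $Q_x$ is, after the renaming $a\mapsto a_x$ (with $t_x$ playing the role of the fresh atom on both sides), exactly the rule-by-rule \emph{arrow reversal} of $\pmm{P}{M}{x}[t_x]$, where reversing $H\leftarrow B^{+},\neg B^{-}$ means forming $B^{+}\leftarrow H,\neg B^{-}$. Indeed, unwinding $P^M_r=(P^M)_r=(P_r)^M$ and the definitions of $P\mapsto P[t]$ and $P\mapsto P_x$: every rule of $\pmm{P}{M}{x}[t_x]$ has a non-empty positive body (rules of $P^M_r$ with empty positive body acquire $t_x$, and the added constraints $\bot\leftarrow x$ and $\bot\leftarrow b$, $b\in\at(P)\setminus M$, already have one), the rules of $P^M_x$ are precisely the renamed reverses of the rules of $P^M_r[t_x]$, and the facts $x_x$ and $b_x$ are the renamed reverses of $\bot\leftarrow x$ and $\bot\leftarrow b$. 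The one elementary ingredient to record is the \emph{reversal identity}: for arbitrary finite sets of atoms $H,B$ and any interpretation $N$, $N\models(H\leftarrow B)$ iff the complement $\overline{N}$ satisfies $(B\leftarrow H)$ --- both conditions amount to $B\subseteq N\Rightarrow H\cap N\neq\emptyset$ --- and this remains valid in the degenerate cases $B=\emptyset$ (facts) and $H=\emptyset$ (constraints), which covers the reduct rules after $[t_x]$ and the added constraints. Applying the identity rule by rule yields an inclusion-reversing bijection $N\mapsto\overline{N}$ (followed by the renaming) between the models of $\pmm{P}{M}{x}[t_x]$ over $\at(P)\cup\{t_x\}$ and the models of $Q_x$, under which $t_x\in N$ iff $t_x$ is absent from the image.

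It remains to bridge $\pmm{P}{M}{x}$ and $\pmm{P}{M}{x}[t_x]$ and to trade ``model'' for ``minimal model''. For the bridge: since $t_x$ is fresh and occurs only as the sole body atom of the rules $H(r)\leftarrow t_x$, one has that $N'\models\pmm{P}{M}{x}$ implies $N'\cup\{t_x\}\models\pmm{P}{M}{x}[t_x]$, and conversely that $N\models\pmm{P}{M}{x}[t_x]$ with $t_x\in N$ implies $N\setminus\{t_x\}\models\pmm{P}{M}{x}$ (the heads of those rules being already forced). Hence $\pmm{P}{M}{x}$ has a model iff $\pmm{P}{M}{x}[t_x]$ has one containing $t_x$, iff --- by the bijection --- $Q_x$ has a model \emph{avoiding} $t_x$. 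Finally, interpretations being finite, $Q_x$ has a model avoiding $t_x$ iff it has a minimal one avoiding $t_x$ (shrink any such model to a minimal submodel; conversely a minimal model avoiding $t_x$ is in particular a model avoiding $t_x$), i.e.\ iff $t_x$ fails to lie in every minimal model of $Q_x$. Chaining the equivalences gives the displayed biconditional, and Proposition~\ref{prop:2} then yields the claim.

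The step that demands the most care is the rule-by-rule reversal/complementation bookkeeping: treating the fresh atom $t_x$ and the copy atoms $a_x$ uniformly, matching the extra constraints $\bot\leftarrow x$, $\bot\leftarrow b$ with the facts $x_x$, $b_x$ of $Q_x$, and fixing exactly the atom universe so that complementation is a bona fide bijection of the two model sets --- in particular handling disjunctive rules with several head atoms or several positive-body atoms correctly. Everything else --- shrinking to a minimal submodel and adding or deleting $t_x$ --- is routine.
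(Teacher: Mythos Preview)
Your proof is correct and rests on the same complementation idea as the paper's: models of $P_r^M$ (subject to excluding $x$ and all of $\at(P)\setminus M$) correspond, via set complementation in the universe $\at(P)\cup\{t_x\}$, to models of $P^M_x\cup\{x_x\}\cup(\at(P)\setminus M)_x$, with membership of $t_x$ toggled. The paper carries out this complementation by hand, starting from Proposition~\ref{prop:1}: given a putative smaller model $N\subset M$ of $P_r^M$, it verifies rule by rule that $(\at(P)\setminus N)_x$ is a model of $Q_x$ missing $t_x$, and conversely. You instead route through Proposition~\ref{prop:2}, isolate the single-$x$ equivalence, and package the rule-by-rule check into the general ``reversal identity'' $N\models(H\leftarrow B)\iff\overline{N}\models(B\leftarrow H)$, observing that $Q_x$ is literally the renamed arrow-reversal of $\pmm{P}{M}{x}[t_x]$. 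This is a cleaner and more conceptual organization of the same argument; it also makes transparent why the extra constraints in $\pmm{P}{M}{x}$ correspond exactly to the facts $x_x$ and $(\at(P)\setminus M)_x$ in $Q_x$. Your bridging step between $\pmm{P}{M}{x}$ and $\pmm{P}{M}{x}[t_x]$, and the passage from ``some model avoids $t_x$'' to ``some minimal model avoids $t_x$'', are both sound and handled correctly.
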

\begin{proof}
  ($\Leftarrow$) Since $M$ is a model of $P$, $M$ is a model of $P^M$.
  Thus, $M$ is a model of $P_r^M$. By Proposition \ref{prop:1},
  it suffices to show that $M$ is a minimal model of $P_r^M$.

  Let us assume that for some~$N\subset M$, $N\models P_r^M$. Let
  $x\in M\setminus N$.  Finally, let us set $N'=\at(P)\setminus N$.
  We will show that $N_x'$ is a model of $P_x^M$. To this end, let us
  consider a rule~$U_x\la V_x$ in $P_x^M$ such that $U_x\neq \{t_x\}$,
  and assume that $V_x\subseteq N'_x$. It follows that
  $V \subseteq N'$. Since the rule~$V \la U$ belongs to $P_r^M$,
  $N\models P^M$, and $V\cap N=\emptyset$, we have $U\not\subseteq
  N$.
  Thus, $U\cap N'\not=\emptyset$ and so,
  $U_x\cap N_x' \not=\emptyset$. Hence, $N_x'\models U_x\la
  V_x$.
  Next, let us consider a rule~$t_x \la V_x$ in $P_x^M$. Since $V \la$
  is a rule in $P_r^M$ and $N\models P_r^M$, we have
  $V\cap N\not= \emptyset$. Thus, $V\not\subseteq N'$ and so,
  $V_x\not\subseteq N'_x$.  Consequently, $N'_x\models t_x\la V_x$.

  Since $\{x\}\cup (\at(P)\setminus M) \subseteq N'$, it follows that 
  $N'_x\models P_x^M\cup \{x_x\} \cup (\at(P)\setminus M)_x$.
  Since $t\notin N'$, $t_x\notin N_x'$. Thus, there is a minimal model of
  $P_x^M\cup\{x_x\}\cup (\at(P)\setminus M)_x$ that does not contain $t_x$,
  a contradiction (each minimal model of $P_x^M\cup\{x_x\}\cup 
  (\at(P)\setminus M)_x$ contained in $N_x'$ has this property). 

  \noindent
  ($\Rightarrow$) Since $M\in\AS(P)$, $M$ is a model of
  $P$. Let us assume that for some~$x\in M$ and for some minimal model~$N_x'$ 
  of $P_x^M\cup \{x_x\}\cup (\at(P)\setminus M)_x$, $t_x\notin N_x'$. 
  Let us define $N=\at(P)\setminus N'_x$. Since $\{x\} \cup (\at(P)\setminus M)
  \subseteq N'_x$, $N$ is a subset of $M \setminus\{x\}$. Reasoning 
  similarly as before, we can show that $N$ is a model of $P_r^M$. 
  This is a contradiction, as $M$ is  minimal model of $P_r^M$. Thus, 
  the assertion follows by Proposition \ref{prop:1}.
\end{proof}

By Proposition~\ref{prop:5} checking whether $M$ is an answer set of
$P$ requires to verify a certain condition for every $x \in M$.  That
condition could be formulated in terms of atoms in $\at(P)\cup\{t\}$
(by dropping the subscripts $x$ in the atoms of the program $P_x$ and
in the condition). However, if a single normal program is to represent
the condition for all $x\in M$ together, we have to combine the
programs $P_x$. To avoid unwanted interactions, we first have to
standardize the programs apart. This is the reason why we introduce
atoms $y_x$ and use them to define copies of $P_x$ customized to
individual $x$'s.

\newcommand{\Prg}[1]{\ensuremath{P_{\text{#1}}}}

Given a program~$P$ and the customized programs $P_x$, we now describe
the promised translation. To this end, for every atom~$x\in \at(P)$,
we introduce a fresh atom~$\oln{x}$. We set:
\renewcommand{\wh}[1]{\ensuremath{#1_{\mathrm{trans}}}}
\newcommand{\whh}[2]{\ensuremath{#1_{\text{tr,}#2}}}
\begin{align*}
  \Prg{xor} =& \{x \la \naf \oln{x}\rsep
               \oln{x}\la \naf x \mid x\in \at(P)\}\displaybreak[1]\\
  \Prg{aux} =& \{x_x \la \neg\oln{x} \rsep
               y_x \la \neg\oln{x}, \neg y \mid x, y\in \at(P)\}\displaybreak[1]\\
  \Prg{diag} =&  \Prg{xor}\cup \Prg{aux} \nonumber\cup\bigcup_{x\in \at(P)} 
                P_x\displaybreak[1]\\
  \Prg{mod} =& \{\bot \la \naf{H}, B^+,\naf B^-\mid
               H\la B^+,\naf B^-\in P\}\\
  \Prg{true} =& \{\bot \la x, \naf t_x\mid x\in \at(P)\}\displaybreak[1]\\
  \wh{P} =& \Prg{diag}\nonumber\cup \Prg{mod} \cup \Prg{true}
\end{align*}
The following observations are immediate and central:
\begin{enumerate}
\item For a dual-normal program~$P$, $\wh{P}$ is normal.
\item For a normal program~$P$, $\wh{P}$ is dual-normal.
\end{enumerate}

Hence, the following result not only establishes the connection
between the answer sets of $P$ and $\wh{P}$ but also proves that the
transformation encodes dual-normal as normal programs, as desired, and
at the same time, encodes normal programs as dual-normal ones.
Moreover, the transformation can be implemented to run in polynomial
time and so, produces polynomial-size programs.

\begin{samepage}
\begin{theorem}\label{thm:trans2}\nopagebreak
  Let $P$ be a program, $M \subseteq \at(P)$,
  $P'=\bigcup_{x\in M} (P_x^M \cup \{x_x\} \cup(\at(P)\setminus M)_x)$
  and $M_P= M\cup \{\oln{x}\mid x\in \at(P)\setminus M\}$. Then
  $M\in\AS(P)$ if and only if for every minimal model~$N$ of $P'$,
  $M_P\cup N\in\AS(\wh{P})$.  Moreover, every answer set of $\wh{P}$
  is of the form $M_P\cup N$ for $M\subseteq \at(P)$ and a minimal
  model~$N$ of $P'$.
\end{theorem}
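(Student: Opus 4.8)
The plan is to compute $\AS(\wh{P})$ explicitly and compare the result with Proposition~\ref{prop:5}. Every rule of $\wh{P}$ has a non-empty head except those in $\Prg{mod}\cup\Prg{true}$: for a rule of $P_x$ the head $B^+_x$ is non-empty because the $[t]$-transformation leaves no rule of $P_r[t]$ with an empty positive body. So $\Prg{diag}$ is the proper part of $\wh{P}$ and $\Prg{mod}\cup\Prg{true}$ its constraint part, and by Proposition~\ref{prop:1}(3), $I\in\AS(\wh{P})$ iff $I\in\AS(\Prg{diag})$ and $I\models\Prg{mod}\cup\Prg{true}$. It remains to describe $\AS(\Prg{diag})$ and then to impose the two constraint sets.

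To describe $\AS(\Prg{diag})$ I would exploit its layered structure via a splitting-set argument (equivalently, a direct reduct computation). The set $U=\at(P)\cup\{\oln{x}\mid x\in\at(P)\}$ is a splitting set whose bottom part is exactly $\Prg{xor}$, and $\AS(\Prg{xor})=\{\,M_P\mid M\subseteq\at(P)\,\}$ because every answer set of $\Prg{xor}$ picks, for each $x$, exactly one of $x,\oln{x}$. Evaluating the top part $\Prg{aux}\cup\bigcup_x P_x$ against $M_P$ yields a positive program (every negative body literal there is over $U$ and is hence removed): $\Prg{aux}$ collapses to the facts $\{x_x\mid x\in M\}\cup\{y_x\mid x\in M,\ y\in\at(P)\setminus M\}$, and each $P_x$ becomes $P_x^M$. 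This positive program decomposes over the pairwise disjoint atom families $(\at(P)\cup\{t\})_x$; for $x\notin M$ its $x$-component is $P_x^M$ alone, whose unique minimal model is $\emptyset$ (all rule heads are non-empty), and for $x\in M$ its $x$-component is $P_x^M\cup\{x_x\}\cup(\at(P)\setminus M)_x$. Hence the minimal models of the evaluated top part are exactly those of $P'$, and the splitting theorem yields $\AS(\Prg{diag})=\{\,M_P\cup N\mid M\subseteq\at(P),\ N\text{ a minimal model of }P'\,\}$; this already gives the ``moreover'' part.

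Next I would impose the constraints. For $I=M_P\cup N$ one checks that $I\models\Prg{mod}$ iff $M\models P$ (the body of $\bot\la\naf H,B^+,\naf B^-$ is satisfied under $I$ exactly when $M$ violates $H\la B^+,\naf B^-$), and $I\models\Prg{true}$ iff $t_x\in N$ for every $x\in M$. So, for a minimal model $N$ of $P'$, $M_P\cup N\in\AS(\wh{P})$ iff $M\models P$ and $t_x\in N$ for all $x\in M$. Since the components of $P'$ live on disjoint atom families, its minimal models are exactly the unions $\bigcup_{x\in M}N_x$ of minimal models $N_x$ of $P_x^M\cup\{x_x\}\cup(\at(P)\setminus M)_x$, so ``$t_x\in N$ for every minimal model $N$ of $P'$ and every $x\in M$'' is equivalent to ``for every $x\in M$, $t_x$ lies in every minimal model of $P_x^M\cup\{x_x\}\cup(\at(P)\setminus M)_x$''. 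As $P'$ is always satisfiable (so it has a minimal model), ``$M_P\cup N\in\AS(\wh{P})$ for every minimal model $N$ of $P'$'' is therefore equivalent to ``$M\models P$ and, for every $x\in M$, $t_x$ belongs to every minimal model of $P_x^M\cup\{x_x\}\cup(\at(P)\setminus M)_x$'', which by Proposition~\ref{prop:5} is precisely $M\in\AS(P)$.

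The main obstacle will be the bookkeeping with the standardized-apart copies: confirming that $U$ is a splitting set, that evaluating the top program against $M_P$ really produces $\bigcup_x P_x^M$ together with the stated facts, and that the resulting program decomposes into the claimed components; and, on the semantic side, the passage between ``minimal models of the single program $P'$'' and ``a minimal model of each $P_x^M\cup\{x_x\}\cup(\at(P)\setminus M)_x$'', which is exactly what disjointness of the atom families $\{\,y_x\mid y\in\at(P)\cup\{t\}\,\}$ for distinct $x$ is designed to provide.
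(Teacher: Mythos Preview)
Your proof is correct and rests on the same two pillars as the paper's argument: separating the constraint part $\Prg{mod}\cup\Prg{true}$ from the proper part $\Prg{diag}$ via Proposition~\ref{prop:1}(3), and invoking Proposition~\ref{prop:5} to link the conditions ``$M\models P$'' and ``$t_x$ in every minimal model of $P_x^M\cup\{x_x\}\cup(\at(P)\setminus M)_x$'' to $M\in\AS(P)$. The organizational difference is that you first compute $\AS(\Prg{diag})$ in one shot via the splitting theorem (bottom $\Prg{xor}$, top $\Prg{aux}\cup\bigcup_x P_x$), whereas the paper proceeds bidirectionally and computes the reduct $\Prg{diag}^{M_P\cup N}$ directly, obtaining the same positive program $\bigcup_{x}P_x^M\cup M_P\cup\bigcup_{x\in M}(\{x_x\}\cup(\at(P)\setminus M)_x)$ by hand. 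Your route has the advantage that the ``moreover'' clause drops out immediately from the description of $\AS(\Prg{diag})$, and that the passage from ``$t_x$ in every minimal model of $P'$'' to ``$t_x$ in every minimal model of the $x$-component'' is made explicit through the disjoint-vocabulary decomposition; the paper leaves this last step implicit. One small wording slip: when you argue that for $x\notin M$ the program $P_x^M$ has $\emptyset$ as its unique minimal model, the relevant fact is that all rule \emph{bodies} $H_x$ in $P_x^M$ are non-empty (since $H\neq\emptyset$ for rules of $P_r$), not the heads.
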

\end{samepage}
\begin{proof}
  ($\Rightarrow$) Let $M$ be an answer set of $P$ and let $N$ be any
  minimal model of~$P'$. Since $M$ is a model of $P$ by
  Proposition~\ref{prop:1}, $M_P\cup N$ satisfies all constraints
  in~$\Prg{mod}$. Proposition~\ref{prop:5} implies that for
  every~$x \in M$, $t_x\in N$. Thus, $M_P\cup N$ also satisfies all
  constraints in~$\Prg{true}$.
  To prove that $M_P\cup N\in\AS(\wh{P})$ it remains to show that
  $M_P\cup N\in\AS(\Prg{diag})$ (cf.\ Proposition~\ref{prop:1}).  To
  this end, we observe that, for each $x\in\at(P)$,
  $P_x^{M_P\cup N}=P_x^M$ and thus
  $\Prg{diag}^{M_P\cup N} = \bigcup_{x\in \at(P)} P_x^M \cup M_P \cup
  \bigcup_{x\in M}\big( \{x_x\} \cup (\at(P)\setminus M)_x\big)$.
  Since all rules in $\bigcup_{x\in \at(P)\setminus M} P_x^M$ have a
  nonempty body that is disjoint with $M_P\cup N$, and since $N$ is a
  model of
  $P'=\bigcup_{x\in M} (P_x^M \cup \{x_x\} \cup(\at(P)\setminus
  M)_x)$,
  $M_P\cup N$ is a model of $\Prg{diag}^{M_P\cup N}$. Since $N$ is a
  minimal model of~$P'$, $M_P\cup N$ is a minimal model of
  $\Prg{diag}^{M_P\cup N}$.

  \noindent
  ($\Leftarrow$) Let $N$ be a minimal model of~$P'$ and $M_P\cup N$ an
  answer set of $\wh{P}$.  Clearly, $M_P\cup N$ satisfies the
  constraints in~$\Prg{mod}$ and so, $M$ is a model of $P$.  Let
  $x\in M$. Since $M_P\cup N$ satisfies all constraints
  in~$\Prg{true}$, $t_x\in M_P\cup N$.  Thus, $t_x\in N$. By
  Proposition~\ref{prop:5}, $M$ is an answer set of $P$.

  To prove the second part of the assertion, let us consider an answer
  set~$A$ of $\wh{P}$. Let us define $M= A \cap \at(P)$. Because of
  the rules in~$P^x_{xor}$, $A=M_P\cup N$ for some
  set~$N\subseteq \bigcup_{x\in \at(P)} (\at(P)\cup t)_x$.  By
  Proposition~\ref{prop:1}, $A$ is an answer set of $\Prg{diag}$ that
  is, $A$ is a minimal model of $\Prg{diag}^{A}$. As above, we have
  $\Prg{diag}^{A} = \big(\bigcup_{x\in \at(P)} P_x\big)^M \cup M_P
    \cup \bigcup_{x\in M} \big(\{x_x \} \cup (\at(P)\setminus
    M)_x\big)$
  and conclude that $N$ is a minimal model of~$P'$.
\end{proof}

Our translation allows us to extend the class of dual-normal programs
so that the problem to decide the existence of answer sets remains
within the first level of the polynomial hierarchy. We recall that a
program~$P$ is \emph{head-cycle free $($HCF$)$}~\cite{Ben-EliyahuD94}
if the positive dependency digraph of $P$ has no directed cycle that
contains two atoms belonging to the head of a rule in~$P$.
The \emph{positive dependency digraph} of $P$ has as vertices the
atoms~$\at(P)$ and a directed edge~$(x,y)$ between any two
atoms~$x,y \in \at(P)$ for which there is a rule~$r\in P$ with
$x\in H(r)$ and $y\in B^+(r)$.
It is well known that it is \NP-complete to decide whether a head-cycle
free program has an answer set.  The class of HCF programs arguably is
the most natural class of programs that contains all normal programs
and for which deciding the existence of answer sets is \NP-complete.

We now define a program~$P$ to be \emph{body-cycle free $($BCF$)$} if the 
positive dependency graph of $P$, has no directed cycle that contains two 
atoms belonging to the \emph{positive body} of a rule in $P$. In analogy to HCF 
programs, BCF programs trivially contain the class of dual-normal programs.
Inspecting our translation, yields the following observations:
\begin{enumerate}
\item For a HCF program~$P$, $\wh{P}$ is BCF.
\item For a BCF program~$P$, $\wh{P}$ is HCF.
\end{enumerate}
Since $\wh{P}$ is efficiently obtained from~$P$, the following result
is a direct consequence of Theorem~\ref{thm:trans2} and the fact that
the consistency problem for HCF programs is \NP-complete.

\begin{theorem}\label{thm:bcf}
The problem to decide whether a BCF program~$P$ has an answer set is
\NP-complete.
\end{theorem}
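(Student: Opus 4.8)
The plan is to establish membership in \NP\ and \NP-hardness separately, in both cases leveraging machinery already developed in the paper. For membership, I would argue that deciding the existence of an answer set of a BCF program reduces in polynomial time to the same problem for HCF programs, which is known to lie in \NP. Concretely, given a BCF program~$P$, form the translation~$\wh{P}$ from Section~4. By the second observation following Theorem~\ref{thm:trans2}, $\wh{P}$ is HCF; moreover $\wh{P}$ is computable from~$P$ in polynomial time and has polynomial size. Theorem~\ref{thm:trans2} tells us that $P$ has an answer set if and only if for some $M\subseteq \at(P)$ there is a minimal model~$N$ of the associated program~$P'$ with $M_P\cup N\in\AS(\wh{P})$ — and in fact every answer set of~$\wh{P}$ has this form. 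Hence $\AS(P)\neq\emptyset$ iff $\AS(\wh{P})\neq\emptyset$. Since the latter is decidable in \NP, so is the former.

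For \NP-hardness, the cleanest route is again through the translation, now run in the opposite direction: an arbitrary normal program~$Q$ is trivially HCF (the head of every rule is a single atom, so no positive-dependency cycle can contain two head atoms of one rule), hence a fortiori one can instead simply observe that every dual-normal program is BCF, since its proper rules have at most one positive body atom and so no cycle can pass through two positive-body atoms of a single rule. As the consistency problem for dual-normal programs is already \NP-hard~\cite{Truszczynski11}, and dual-normal programs form a subclass of BCF programs, \NP-hardness of the BCF consistency problem follows immediately. (Alternatively, one may take any normal program~$Q$, which is HCF, apply the translation to obtain the BCF program~$\wh{Q}$, and invoke \NP-hardness of consistency for normal programs together with Theorem~\ref{thm:trans2}; but the inclusion argument via dual-normal programs is shorter.)

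Combining the two directions yields \NP-completeness. There is essentially no hard step here: the real content is packaged in Theorem~\ref{thm:trans2} and in the two structural observations that the translation sends HCF programs to BCF programs and BCF programs to HCF programs. The one point that merits a sentence of care is the equivalence ``$\AS(P)\neq\emptyset \iff \AS(\wh{P})\neq\emptyset$''; this does not follow from the forward direction of Theorem~\ref{thm:trans2} alone, because that direction only guarantees an answer set of~$\wh{P}$ when~$P'$ has a minimal model. One must also note that~$P'$ is a definite (indeed dual-Horn-style) program built from reducts and facts, so it always has at least one minimal model, and then the ``moreover'' clause of Theorem~\ref{thm:trans2} supplies the converse inclusion. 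With that observation in place the argument is complete.
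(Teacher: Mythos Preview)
Your proposal is correct and follows exactly the route the paper takes: the paper states that Theorem~\ref{thm:bcf} is ``a direct consequence of Theorem~\ref{thm:trans2} and the fact that the consistency problem for HCF programs is \NP-complete,'' using the observation that $\wh{P}$ is HCF when $P$ is BCF (and vice versa) together with the polynomial-time computability of the translation. Your write-up is in fact more careful than the paper's one-line justification---you correctly flag that the forward direction needs $P'$ to possess a minimal model (it does, being positive and constraint-free; ``definite'' is a slight overstatement for general BCF~$P$, but the conclusion stands), and your hardness argument via the inclusion of dual-normal programs in BCF is a clean alternative to re-running the translation.
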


The translation $\wh{P}$ preserves the cycle-freeness of the positive
dependency graph (the positive dependency graph of $P$ is cycle-free
if and only if the positive dependency graph of $\wh{P}$ is
cycle-free). That is essential for our derivation of
Theorem~\ref{thm:bcf}.  However, in general, there is no one-to-one
correspondence between answer sets of $P$ and answer sets of
$\wh{P}$. Thus, as a final result in this section, we provide a slight
adaption of the translation~$\wh{P}$ in which the answer sets of
programs~$P$ and $\wh{P}$ are in a \emph{one-to-one correspondence}.
To this end define,
${P}^\ast = \wh{P} \cup \{ y_x \la t_x \mid x,y\in \at(P)\}$.  Note
that ${P}^\ast$ still turns dual-normal programs to normal programs
and vice versa, but we lose the property that cycle-freeness is
preserved (the new rules may introduce additional cycles in the
positive dependency graph).  Thus, both Theorem~\ref{thm:trans2} and
Theorem~\ref{thm:trans3} are of interest.

\begin{theorem}\label{thm:trans3}
  Let $P$ be a program, $M\subseteq \at(P)$ and $M_P$ as in
  Theorem~\ref{thm:trans2}. Then, $M\in\AS(P)$ if and only if
  $M'=M_P\cup \bigcup_{x\in M} (\at(P) \cup \{t\})_x \in\AS(P^\ast)$.
  Moreover, every answer set of $P^\ast$ is of the form~$M'$ for
  some~$M\subseteq \at(P)$.

\end{theorem}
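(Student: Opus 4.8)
The plan is to reduce the equivalence to Proposition~\ref{prop:5} by isolating the effect of the extra rules $y_x\la t_x$ that distinguish $P^\ast=\wh{P}\cup\{y_x\la t_x\mid x,y\in\at(P)\}$ from $\wh{P}$. Since these rules have empty negative body, $P^{\ast I}=\wh{P}^{I}\cup\{y_x\la t_x\mid x,y\in\at(P)\}$ for every interpretation $I$; and because $t$ is fresh, the copy $t_x$ never occurs in a positive body of $\wh{P}$ (it occurs only in heads of rules of $P_x$ coming from rules of $P_r[t]$ with empty positive body). Thus the new rules can never help to \emph{derive} $t_x$, whereas as soon as $t_x$ is present they force every copy $y_x$, $y\in\at(P)$, to be present as well; in particular, any model $W$ of a program that contains $\{y_x\la t_x\mid y\in\at(P)\}$ and no copy atom outside $(\at(P)\cup\{t\})_x$, and has $t_x\in W$, satisfies $W=(\at(P)\cup\{t\})_x$.

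The first step is to compute the reduct $P^{\ast M'}$. Using $M'\cap\at(P)=M$ and that $\oln{x}\notin M'$ precisely when $x\in M$, and likewise $t_x\in M'$ precisely when $x\in M$, it decomposes into pairwise variable-disjoint pieces: the facts $M_P$ over $\at(P)\cup\{\oln{x}\mid x\in\at(P)\}$; for each $x\in M$ a block $Q_x:=P'_x\cup\{y_x\la t_x\mid y\in\at(P)\}$ over the $x$-copies, where $P'_x:=P_x^M\cup\{x_x\}\cup(\at(P)\setminus M)_x$ (so that $P'=\bigcup_{x\in M}P'_x$); for each $x\notin M$ a block $P_x^M\cup\{y_x\la t_x\mid y\in\at(P)\}$ that is \emph{inert}, having no facts and only rules whose positive body is a nonempty set of $x$-copies (the copied heads $H_x$ are nonempty since $P_r$ has no empty-head rule); plus the constraints surviving from $\Prg{mod}$ and $\Prg{true}$. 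If $M$ is a model of $P$, these surviving constraints have bodies disjoint from $M'$, hence hold in $M'$ and in all its subsets and are irrelevant for minimality; if $M$ is not a model of $P$, then $\Prg{mod}^{M'}$ contains a constraint $\bot\la B^+$ with $B^+\subseteq M'$, so $M'$ is not even a model of $P^{\ast M'}$. Consequently $M'\in\AS(P^\ast)$ iff $M$ is a model of $P$ and, for every $x\in M$, $(\at(P)\cup\{t\})_x$ is a minimal model of $Q_x$.

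The conceptual heart is then the lemma: $(\at(P)\cup\{t\})_x$ is a minimal model of $Q_x$ if and only if $t_x$ belongs to every minimal model of $P'_x$. For the ``if'' direction, any model of $Q_x$ is a model of $P'_x$, hence contains some minimal model of $P'_x$, hence contains $t_x$, hence (by the observation of the first paragraph) equals $(\at(P)\cup\{t\})_x$; so $(\at(P)\cup\{t\})_x$ is the unique model, a fortiori the unique minimal model, of $Q_x$. For the ``only if'' direction, a minimal model of $P'_x$ omitting $t_x$ is still a model of $Q_x$ and is strictly contained in $(\at(P)\cup\{t\})_x$, contradicting the assumed minimality. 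Plugging this into the characterization just obtained, $M'\in\AS(P^\ast)$ iff $M$ is a model of $P$ and, for every $x\in M$, $t_x$ lies in every minimal model of $P_x^M\cup\{x_x\}\cup(\at(P)\setminus M)_x$, which is exactly the condition of Proposition~\ref{prop:5} for $M\in\AS(P)$.

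For the ``moreover'' part I would take an arbitrary answer set $A$ of $P^\ast$, set $M:=A\cap\at(P)$, and repeat the analysis with $A$ in place of $M'$: the $\Prg{xor}$ rules force $A\cap(\at(P)\cup\{\oln{x}\mid x\in\at(P)\})=M_P$; as a model of $P^\ast$, $A$ satisfies the constraints of $\Prg{true}$, so $t_x\in A$ for all $x\in M$; since $A$ is a minimal model of $P^{\ast A}$ and the blocks for $x\notin M$ are inert, $A$ contains no copy $y_x$ with $x\notin M$ (so $P^{\ast A}=P^{\ast M'}$ and each $x$-block with $x\in M$ is $Q_x$); and for $x\in M$, the restriction of $A$ to the $x$-copies is a model of $Q_x$ containing $t_x$, hence equals $(\at(P)\cup\{t\})_x$. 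Therefore $A=M_P\cup\bigcup_{x\in M}(\at(P)\cup\{t\})_x=M'$. The main obstacle I anticipate is the careful bookkeeping of the reduct decomposition---verifying that the $\Prg{mod}$ and $\Prg{true}$ constraints are immaterial for minimality once $M$ is a model of $P$, and that the $x$-blocks for $x\notin M$ are genuinely inert---whereas the mathematical substance lies entirely in the short lemma on how the rules $y_x\la t_x$ collapse $Q_x$ to a single model.
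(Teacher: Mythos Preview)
Your proposal is correct. The paper states Theorem~\ref{thm:trans3} without a printed proof; the source contains only an informal sketch wrapped in a \verb|\nop| macro and hence suppressed from the output. That sketch follows exactly the line you take: the extra rules $y_x\la t_x$ force the only candidate answer sets of $P^\ast$ to be of the form $M'$, and such an $M'$ survives as an answer set precisely when every minimal model of $P'_x$ (for each $x\in M$) contains $t_x$, which is the criterion of Proposition~\ref{prop:5}. Your explicit reduct decomposition and the short lemma characterizing when $(\at(P)\cup\{t\})_x$ is the (unique) minimal model of $Q_x$ make this precise. One minor wording issue: when you say the surviving $\Prg{mod}$-constraints have ``bodies disjoint from $M'$'', you mean ``bodies not contained in $M'$'' (i.e., $B^+\setminus M'\neq\emptyset$); the argument is unaffected.
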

\nop{
\noindent
\emph{Proof Sketch:}
 First, observe that due to the additional rules the only 
 candidates for models of $\widehat{P}^\ast$ are of the form~$M'$ 
 for some~$M\subseteq \at(P)$.
 Using the insights from Theorem
 \ref{thm:trans2} one can show that only those $M'$ survive  as
 answer sets for which every minimal model of $P_x$ contains $t_x$.
 From this observation, any model of $P_x$ contains $t_x$ but then
 one can show that no~$N\subset M'$ can be model of $(\widehat{P}^\ast)^{M'}$.
}

\section{Expressibility of Dual-Normal Programs}\label{sec:express}

SE-models, originating from the work by~\citex{Turner01}, and
UE-models, proposed by~\citex{eite-fink-03}, characterize strong and
uniform equivalence of programs, respectively. More recently, they
turned out to be useful also for comparing program classes with
respect to their expressivity (see e.g., work
by~\citex{EiterFPTW13}). In what follows, we first recall the main
results from the literature, focusing on disjunctive and normal
programs. Then, we complement these results by characterizations of
collections of SE- and UE-models of dual-normal programs. Finally, we
strengthen existing complexity results.

\newcommand{\SSE}[1]{\ensuremath{\mathcal{S}_{#1}}}
\subsection{SE-models and UE-models}
An \emph{SE-interpretation} is a pair~$(X,Y)$ of sets of atoms such
that $X\subseteq Y$.
We denote by~$\SSE{Z}$ the class~$\SB (X,Y) \SM Y\subseteq Z \SE$ of
all SE-interpretations over~$Z$.
An SE-interpretation~$(X,Y)$ is an \emph{SE-model} of a program~$P$,
written $(X,Y)\mse P$, if $Y\models P$ and $X\models P^Y$.
SE-models of a program $P$ contain, in particular, all 
information needed to identify the answer sets of $P$. Specifically,
$Y$ is an answer set of $P$ if and only if $\langle Y,Y\rangle$ is an 
SE-model of $P$ and for every $X\subset Y$, $\langle X,Y\rangle$ is not.

An SE-model~$(X,Y)$ of a program~$P$ is a \emph{UE-model} of $P$ if
for every SE-model~$(X',Y)$ of $P$ such that $X\subset X'$, $X'=Y$
holds. We write $\SEQ(P)$ ($\UE(P)$) for all SE-interpretations that
are SE-models (UE-models) of a program~$P$.

Programs~$P$ and $Q$ are \emph{equivalent}, denoted by~$P\equiv Q$, if
$P$ and $Q$ have the same answer sets. They are \emph{strongly
  equivalent}, denoted by~$P\equivs Q$, if for every program~$R$,
$P\cup R\equiv Q\cup R$; and \emph{uniformly equivalent}, denoted
$P\equivu Q$, if for every set~$F$ of normal facts,
$P\cup F\equiv Q\cup F$.  The main results concerning these notions
are (1)~$P\equivs Q$ if and only if
$\SEQ(P)=\SEQ(Q)$~\cite{Lifschitz01} and (2)~$P\equivu Q$ if and only
if $\UE(P)=\UE(Q)$~\cite{eite-fink-03}.

We now recall definitions of useful properties of sets of
SE-inter\-pretations~\cite{EiterFPTW13}.  
\begin{definition}
  A set~$\S$ of SE-interpretations is \emph{complete} if
  \begin{enumerate}
  \item $(X,Y)\in \S$ implies $(Y,Y)\in \S$; and
  \item $(X,Y)$, $(Z,Z)\in \S$ and $Y \subseteq Z$ imply
    $(X,Z)\in \S$.
  \end{enumerate}
  Next, $\S$ is \emph{closed under here-intersection} if for
  all~$(X,Y),(X',Y)\in\S$ we have $(X\cap X',Y)\in\S$.
  Finally, $\S$ is \emph{UE-complete} if 
  \begin{enumerate}
  \item $(X,Y)\in\S$ implies $(Y,Y)\in\S$; 
  \item $(X,Y),(Z,Z)\in\S$ and
    $Y\subset Z$ imply that there is $Y'$ such that
    $Y\subseteq Y'\subset Z$ and $(Y',Z)\in\S$; and
  \item $(X,Y), (X',Y)\in \S$ and $X\subset X'$ imply $X'=Y$.
  \end{enumerate}
\end{definition}

The following results are due to~\citex{EiterFPTW13}.  For each
program~$P$, $\SEQ(P)$ is complete. Conversely, for every complete
set~$\S\subseteq\SSE{A}$ there is a program~$P$ with
$\at(P) \subseteq A$ and $\SEQ(P)=\S$.  For each normal program~$P$,
$\SEQ(P)$ is complete and closed under here-intersection. Conversely,
for every set~$\S$ of SE-interpretations over~$A$ that is complete and
closed under here-intersection there is a normal program~$P$ with
$\at(P)\subseteq A$ and $\SEQ(P)=\S$. Next, for every program~$P$,
$\UE(P)$ is UE-complete. Conversely, for every UE-complete
set~$\U \subseteq \SSE{A}$ of SE-interpretations over~$A$ there is a
{normal} program~$P$ such that $\at(P) = A$ and $\U=\UE(P)$.  Hence,
for every disjunctive program~$P$ there exists a normal program~$P'$
with $\UE(P)=\UE(P')$ (however, such $P'$ can be exponentially larger
than $P$~\cite{Eiter04}). Finally, we make use of the following
technical result.
\begin{samepage}
\begin{lemma}\label{lem:se}
  For every SE-interpretation~$(X,Y)$, $(X,Y)\mse A\la B,\naf C$ if
  and only if at least one of the following conditions holds:
  \begin{enumerate}
    \item\label{lem:se:A} $Y\cap C\neq\emptyset$;
    \item\label{lem:se:B} $B\setminus Y\neq\emptyset$;
    \item\label{lem:se:C} $X\cap A\neq\emptyset$;
    \item\label{lem:se:D} $Y\cap A\neq \emptyset$ and $B\setminus X\neq \emptyset$.
  \end{enumerate}
\end{lemma}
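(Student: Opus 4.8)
The plan is to unwind the definition of SE-satisfaction directly and compare the two sides. Recall that $(X,Y)\mse A\la B,\naf C$ means $Y\models A\la B,\naf C$ \emph{and} $X\models (A\la B,\naf C)^Y$. First I would observe that the reduct of the rule with respect to $Y$ is $A\la B$ if $Y\cap C=\emptyset$, and is empty (vacuously satisfied, so the second conjunct is trivially true) if $Y\cap C\neq\emptyset$. So the whole case $Y\cap C\neq\emptyset$ collapses: the second conjunct is automatic, and the first conjunct $Y\models A\la B,\naf C$ is also automatic because $Y\cap C\neq\emptyset$ falsifies the (classical) body. Hence if condition \ref{lem:se:A} holds, $(X,Y)\mse r$; this matches the right-hand side.

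Next I would handle the remaining case $Y\cap C=\emptyset$, where the reduct is exactly $A\la B$. Here $(X,Y)\mse r$ iff $Y\models A\la B,\naf C$ and $X\models A\la B$. Since $Y\cap C=\emptyset$, the condition $Y\models A\la B,\naf C$ simplifies to $Y\models A\la B$, i.e. $B\setminus Y\neq\emptyset$ or $Y\cap A\neq\emptyset$ (using the satisfaction clause for a rule from the preliminaries, noting $B^-$ is handled by the reduct). Likewise $X\models A\la B$ iff $B\setminus X\neq\emptyset$ or $X\cap A\neq\emptyset$. Because $X\subseteq Y$, we have $B\setminus X\supseteq B\setminus Y$, so $B\setminus Y\neq\emptyset$ already implies $B\setminus X\neq\emptyset$; thus in this case $(X,Y)\mse r$ is equivalent to
\[
  \big(B\setminus Y\neq\emptyset \ \text{or}\ Y\cap A\neq\emptyset\big)\ \text{and}\ \big(B\setminus X\neq\emptyset\ \text{or}\ X\cap A\neq\emptyset\big).
\]
Distributing the conjunction over the disjunctions and simplifying with $B\setminus Y\neq\emptyset\Rightarrow B\setminus X\neq\emptyset$ yields exactly: $B\setminus Y\neq\emptyset$, or $X\cap A\neq\emptyset$, or $\big(Y\cap A\neq\emptyset$ and $B\setminus X\neq\emptyset\big)$ — that is, conditions \ref{lem:se:B}, \ref{lem:se:C}, \ref{lem:se:D}. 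Combined with the already-dispatched condition \ref{lem:se:A} for the case $Y\cap C\neq\emptyset$, this gives the claimed equivalence.

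There is no real obstacle here; the only thing to be careful about is the bookkeeping in the Boolean simplification — in particular making sure that when $Y\cap C\neq\emptyset$ none of \ref{lem:se:B}--\ref{lem:se:D} is needed (they may fail, but \ref{lem:se:A} carries the case), and conversely that when $Y\cap C=\emptyset$ the four-way case split is covered by \ref{lem:se:B}--\ref{lem:se:D} exactly. I would also note explicitly the use of $X\subseteq Y$, since that is what lets $B\setminus Y\neq\emptyset$ subsume a potential fourth disjunct $B\setminus X\neq\emptyset$ standing alone, keeping the final list to four conditions rather than five.
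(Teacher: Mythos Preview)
Your proof is correct. The paper actually states Lemma~\ref{lem:se} without proof (it is introduced as a ``technical result'' that the authors ``make use of''), so there is no argument in the paper to compare against; your direct unwinding of the definitions of reduct and SE-satisfaction, together with the Boolean simplification using $X\subseteq Y$, is exactly the natural route and goes through cleanly.
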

\end{samepage}

\paragraph{Properties of Dual-Normal Programs.}
Our results rely on some new classes of sets of SE-interpretations.
First, we introduce sets of SE-interpretations that are closed under
here-union. This is the dual concept to sets closed under here-intersection.
We will use it to characterize the SE-models of dual-normal programs.
To characterize the UE models of dual-normal programs we need an additional,
quite involved, concept of a splittable set. 

\begin{definition}
  A set~$\S$ of SE-interpretations is called 
  \begin{enumerate}
  \item \emph{closed under here-union} if for any~$(X,Y)\in\S$ and
    $(X',Y)\in \S$, also $(X\cup X',Y)\in \S$; 
  \item \emph{splittable} if for every~$Z$ such that $(Z,Z)\in\S$ and
    every~$(X_1,Y_1),\ldots,(X_k,Y_k)\in\S$ such that $Y_i\subseteq Z$
    ($i=1,\ldots, k$), $(X_1\cup\ldots\cup X_k,Z)\in\S$ or
    $(Z',Z)\in\S$ for some~$Z'$, such that
    $X_1 \cup\ldots\cup X_k \subseteq Z'\subset Z$.
  \end{enumerate}
\end{definition}

Neither property implies the other in general. However, for
UE-complete sets of SE-interpretations, splittability implies
closure under here-union.

\begin{proposition}\label{prop:6}
  If a
  UE-complete
  collection~$\S$ of SE-inter\-pre\-tations is splittable, it is closed
  under here-union.
\end{proposition}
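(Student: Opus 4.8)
The plan is to derive closure under here-union directly from splittability by feeding splittability exactly two SE-interpretations sharing the same second component. Suppose $\S$ is UE-complete and splittable, and let $(X,Y),(X',Y)\in\S$. First I would invoke UE-completeness condition~(1) to obtain $(Y,Y)\in\S$. Now I apply splittability with $Z=Y$, using the two SE-interpretations $(X_1,Y_1)=(X,Y)$ and $(X_2,Y_2)=(X',Y)$; the hypothesis $Y_i\subseteq Z$ is satisfied since $Y_1=Y_2=Y=Z$. Splittability then tells us that either $(X\cup X',Y)\in\S$ — which is exactly what we want — or there is some $Z'$ with $X\cup X'\subseteq Z'\subset Y$ such that $(Z',Y)\in\S$.

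The crux of the argument is ruling out the second alternative, and this is where UE-completeness condition~(3) enters. In that alternative we have $(Z',Y)\in\S$ with $X\subseteq Z'$ and $Z'\subset Y$. If $X\subsetneq Z'$, then applying condition~(3) to the pair $(X,Y),(Z',Y)\in\S$ forces $Z'=Y$, contradicting $Z'\subset Y$. Hence $X=Z'$, so $X\cup X'\subseteq X$, i.e.\ $X'\subseteq X$, and therefore $X\cup X'=X$; but then $(X\cup X',Y)=(X,Y)\in\S$ anyway, so the desired conclusion holds in this case too. Thus in every case $(X\cup X',Y)\in\S$, which establishes closure under here-union.

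I expect the only subtle point to be the bookkeeping in the second alternative: one must be careful that $Z'$ could a priori equal $X$ (when $X'\subseteq X$), in which case the ``bad'' outcome of splittability is in fact harmless, and otherwise condition~(3) is applicable and yields a contradiction with $Z'\subset Y$. Everything else is a routine instantiation of the definitions, so the proof is short. No additional machinery beyond the stated definition of UE-completeness and splittability is needed.
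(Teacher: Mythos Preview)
Your proof is correct and is essentially identical to the paper's argument: both invoke UE-completeness~(1) to get $(Y,Y)\in\S$, apply splittability with $k=2$, and in the second alternative use UE-completeness~(3) on $(X,Y)$ and $(Z',Y)$ to force $Z'=X$, whence $X\cup X'=X\in\S$. The only cosmetic difference is that the paper first disposes of the trivial case $X\cup X'=Y$ before invoking splittability, whereas you let splittability handle it uniformly; this changes nothing of substance.
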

\begin{proof}
  Let $(X_1,Z),(X_2,Z)\in\S$. By UE-completeness, $(Z,Z)\in\S$. Thus, if
  $X_1\cup X_2=Z$ then
  $(X_1\cup X_2,Z)\in \S$.  Otherwise, by splittability,
  $X_1\cup X_2\subseteq Z'$ for some $Z'$ such that $Z'\subset Z$ and
  $(Z',Z)\in\S$. Since $X_1\subseteq Z'\subset Z$ and
  $(X_1,Z),(Z',Z)\in \S$, $Z'=X_1$ (by Condition~(3) of UE-completeness).
  Consequently, $X_1\cup X_2=X_1$ and so,
  $(X_1\cup X_2,Z)\in\S$ in this case, too.
\end{proof}

The converse does not hold, that is, for UE-complete sets,
splittability is a strictly stronger concept than closure under
here-union. As an example consider the
set~$\S=\{(b,b),(c,c),(ab,abcd),(cd,abcd),(abcd,abcd)\}$ that is
UE-complete and closed under here-union. This set is \emph{not}
splittable. Indeed, $(abcd,abcd), (b,b),(c,c)\in \S$, yet there is no
$Z'$ such that $\{bc\}\subseteq Z'\subset \{abcd\}$ and
$(Z',abcd)\in\S$.

As announced above, closure under here-union is an essential property
of sets of SE-models of dual-normal programs.

\begin{theorem}\label{thm:se:1}
  For every dual-normal program~$P$, $\SEQ(P)$ is complete and closed under
  here-union.
\end{theorem}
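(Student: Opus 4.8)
The plan is to prove the two properties separately. Completeness of $\SEQ(P)$ already follows from the general result quoted earlier (for every program $P$, $\SEQ(P)$ is complete), so the real content is closure under here-union. So suppose $(X,Y),(X',Y)\in\SEQ(P)$; I must show $(X\cup X',Y)\in\SEQ(P)$, i.e. $Y\models P$ and $X\cup X'\models P^Y$. The first conjunct is immediate since $(X,Y)\mse P$ already gives $Y\models P$. For the second, it suffices to show $X\cup X'\models r$ for every rule $r\in P^Y$; and since $P^Y$ is obtained from $P$ by the Gelfond--Lifschitz reduct, every rule of $P^Y$ has the form $\head{r}\la\bodyp{r}$ for some $r\in P$ with $\bodyn{r}\cap Y=\emptyset$. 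Because $P$ is dual-normal, each such $r$ is either a constraint (so the reduct rule is $\bot\la\bodyp{r}$) or satisfies $\Card{\bodyp{r}}\le 1$.

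First I would handle the non-constraint case, which is where dual-normality is used. Here $\bodyp{r}$ is empty or a singleton $\{b\}$. If $\bodyp{r}=\emptyset$, then since $X\models P^Y$ the corresponding reduct rule forces $\head{r}\cap X\neq\emptyset$, hence $\head{r}\cap(X\cup X')\neq\emptyset$, so the rule is satisfied. If $\bodyp{r}=\{b\}$ and $b\notin X\cup X'$, the rule is vacuously satisfied; if $b\in X\cup X'$, then WLOG $b\in X$, so since $X\models P^Y$ and $X$ satisfies the reduct rule $\head{r}\la b$ we get $\head{r}\cap X\neq\emptyset$, hence $\head{r}\cap(X\cup X')\neq\emptyset$. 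This is the crucial step: the singleton positive body lets us pin down \emph{one} of the two interpretations $X$, $X'$ in which the body is already true, and then monotonicity of the head does the rest. (Equivalently one can invoke Lemma~\ref{lem:se}, but the direct argument is cleaner here.)

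Next I would handle constraints. If $r\in P$ is a constraint with $\bodyn{r}\cap Y=\emptyset$, the reduct rule is $\bot\la\bodyp{r}$, i.e.\ $\bodyp{r}\not\subseteq X\cup X'$ must be shown. But from $X\models P^Y$ we already have $\bodyp{r}\not\subseteq X$. Here is the subtle point: this alone does \emph{not} give $\bodyp{r}\not\subseteq X\cup X'$, since $X\cup X'$ is larger. So for constraints the singleton-positive-body restriction does \emph{not} apply and we seemingly lose. The resolution is to recall that in a dual-normal program constraints may have arbitrary positive bodies, so I need a different argument for them. The way out is to observe that $Y\models P$ already gives $\bodyp{r}\not\subseteq Y$ for every such constraint (since constraints with $\bodyn{r}\cap Y=\emptyset$ must fail their positive body in $Y$); and since $X\cup X'\subseteq Y$, we get $\bodyp{r}\not\subseteq X\cup X'$ as well. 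So constraints are handled by going up to $Y$, while non-constraint rules are handled by going down to one of $X$, $X'$.

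The main obstacle, then, is exactly the asymmetry just described: one must not try to treat every rule uniformly. Non-constraint rules of a dual-normal program have at most one positive body atom and are handled by the WLOG-pick-one-interpretation argument; constraints may have large positive bodies but are automatically monotone downward from $Y$, so $Y\models P$ suffices for them. Once this case split is in place, everything else is routine bookkeeping with the definition of $\mse$ and the reduct, and the theorem follows.
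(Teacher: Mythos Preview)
Your proof is correct. The core insight---that a singleton positive body lets you pin the body atom to one of $X$, $X'$ and then use monotonicity of the head---is exactly what the paper uses too.

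The presentations differ, though. You work directly with the reduct $P^Y$ and make an explicit case split between constraints and non-constraint rules, handling constraints by the downward-from-$Y$ argument. The paper instead invokes Lemma~\ref{lem:se} uniformly for every rule: it assumes Conditions~(1)--(3) fail for $(X\cup X',Y)$, deduces they fail for $(X,Y)$ and $(X',Y)$, and concludes Condition~(4) must hold for both, giving $Y\cap A\neq\emptyset$ and $B\setminus X\neq\emptyset$, $B\setminus X'\neq\emptyset$. The step $Y\cap A\neq\emptyset$ silently rules out constraints (since their head is empty), so the paper never needs your separate constraint argument; it then says ``since $P$ is dual-normal, $B=\{b\}$'' and finishes. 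Your route is more elementary (no auxiliary lemma) and makes the role of constraints transparent; the paper's route is more compact but hides the constraint case inside the derivation of $Y\cap A\neq\emptyset$. Either is fine.
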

\begin{proof} 
  $\SEQ(P)$ is complete for every program~$P$.  Let
  $(X,Y), (X',Y)\in\SEQ(P)$. We need to show that for every
  rule~$r=A\la B,\naf C$ in $P$, $(X\cup X',Y)\mse r$. To this end,
  let us assume that none of Conditions~(\ref{lem:se:A}),
  (\ref{lem:se:B}), and (\ref{lem:se:C}) of Lemma~\ref{lem:se} holds
  for $(X\cup X',Y)$ and $r$.  Since $X\subseteq X\cup X'$ and
  $X'\subseteq X\cup X'$, none of Conditions~(\ref{lem:se:A}),
  (\ref{lem:se:B}), and (\ref{lem:se:C}) holds for $(X,Y)$ and $r$
  either. Since $(X,Y)\mse r$, Condition~(\ref{lem:se:D}) must hold,
  that is, we have $Y\cap A\neq \emptyset$ and
  $B\setminus X\neq\emptyset$. The same argument applied to $(X',Y)$
  implies that also $B\setminus X'\neq\emptyset$. Since $P$ is
  dual-normal, $B=\{b\}$ and $b\notin X\cup X'$. Thus,
  $B\setminus (X\cup X')\neq\emptyset$ and so,
  Condition~(\ref{lem:se:D}) of Lemma~\ref{lem:se} holds for
  $(X\cup X',Y)$ and $r$.  Consequently, $(X\cup X',Y)\mse r$.
\end{proof}

The conditions of Theorem~\ref{thm:se:1} are not only necessary but
also sufficient.

\begin{theorem}
  \label{thm:dnse}
  For every set~$\S\subseteq\S_A$ of SE-interpretations 
  that is complete and closed under here-union, there exists a
  dual-normal program~$P$ with $\at(P) \subseteq A$ and $\SEQ(P)=\S$.
\end{theorem}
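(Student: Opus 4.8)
The plan is to explicitly construct a dual-normal program $P$ over $A$ whose SE-models are exactly the given set $\S$, mimicking the standard "canonical program" construction (as in the result of \citex{EiterFPTW13} for complete, resp.\ complete and here-intersection-closed, sets) but adapting the rule shapes so that every proper rule has at most one positive body atom. First I would build $P$ as a union $P_1 \cup P_2$ of two kinds of rules. The rules in $P_1$ encode which $Y$ are models: since $\S$ is complete, the set $\{Y \mid (Y,Y)\in\S\}$ is exactly the set of intended total models, and these can be captured by constraints $\bot \la Y, \naf(A\setminus Y)$ for each $Y\subseteq A$ with $(Y,Y)\notin\S$ (constraints are always allowed in dual-normal programs). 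The rules in $P_2$ must then cut out, for each admissible $Y$, exactly those $X\subsetneq Y$ with $(X,Y)\notin\S$, i.e.\ they must fail to be satisfied by $(X,Y)$ in the reduct sense while being satisfied by every $(X',Y)$ with $(X',Y)\in\S$ (and by every $(Z,Z)\in\S$ and every pair $(W,Z)$ with $Y\subseteq Z$, $(W,Z)\in\S$, so that completeness-style closure is respected).

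The key device is Lemma~\ref{lem:se}: a dual-normal proper rule $r = A'\la b,\naf C$ has, via condition~(\ref{lem:se:D}), the characteristic behaviour "$(X,Y)\not\mse r$ forces $b\notin X$ while $(X',Y)\mse r$ can be arranged by putting $b\in X'$". So for a pair $(X,Y)$ to be excluded I want a rule that is violated precisely when the here-part omits some designated atom. Concretely, for each admissible $Y$ and each atom $b\in Y$, consider the rule $r_{Y,b}$ whose head is (a disjunction over) $Y$, whose positive body is $b$, and whose negative body is $A\setminus Y$; by Lemma~\ref{lem:se} this rule is satisfied by $(X,Y)$ iff $b\in X$ (given $Y$ is the top and $X\subseteq Y$). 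Taking the right subfamily of such rules, $(X,Y)\mse P_2$ becomes equivalent to "$X$ contains every atom that is forced, for this $Y$, by the closure structure of $\S$" — and here is exactly where closure under here-union enters: the set of here-parts $\{X \mid (X,Y)\in\S\}$, being closed under union and containing a maximum element (which completeness pins down relative to $Y$), is an upward-closed-under-union family, hence is precisely describable as "all $X$ containing some fixed set of atoms, subject to the total-model constraints" — i.e.\ it is determined by a monotone condition of the form handled by the rules $r_{Y,b}$. For each $Y$ I would let $G_Y = \{X \mid (X,Y)\in\S\}$; closure under here-union gives that $G_Y$ has a greatest element $\hat X_Y$ under the ambient constraints, and I would include $r_{Y,b}$ for those $b$ that lie in every $X\in G_Y$ — but one must be careful that this also correctly handles the interaction across different $Y$'s via completeness condition~(2), so the negative body $\naf(A\setminus Y)$ (switching the rule "off" for the reduct relative to any $Y'\neq Y$) is essential.

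The main obstacle will be the cross-$Y$ bookkeeping: showing that the rules $r_{Y,b}$ designed for one admissible top $Y$ do not accidentally exclude a legitimate SE-interpretation $(X,Z)$ with $Y\subsetneq Z$ (completeness condition~(2)), and conversely that every $(X,Y)\notin\S$ with $(Y,Y)\in\S$ really is excluded. The first half is handled by the negative body: for $Z\supsetneq Y$, $Z\cap(A\setminus Y)\neq\emptyset$, so condition~(\ref{lem:se:A}) of Lemma~\ref{lem:se} makes $r_{Y,b}$ vacuously an SE-model of $(X,Z)$ and also keeps $r_{Y,b}$ out of the reduct $P^Z$. The second half requires checking that if $(X,Y)\notin\S$ while $(Y,Y)\in\S$, then $X$ must miss one of the forced atoms $b$ for $Y$ — which follows because $X$ is not $\supseteq \hat X_Y$-above any element of $G_Y$, and closure under here-union together with completeness is exactly what guarantees that $G_Y$ is the family of supersets (inside $Y$, and inside each admissible set) of its forced atoms. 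Once these containment arguments are pinned down, $\SEQ(P)=\S$ follows by combining the constraint analysis for $P_1$ with the rule analysis for $P_2$, and dual-normality is immediate from the rule shapes ($P_1$ constraints, $P_2$ proper rules with $|B^+|=1$). I would also double-check the boundary cases $X = Y$ and $\S = \emptyset$ separately, as these make some of the "greatest element" statements degenerate.
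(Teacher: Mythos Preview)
Your plan has a genuine gap in the construction of $P_2$. Two things go wrong, one structural and one at the level of the rule you wrote down.

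\textbf{Structural issue.} The inference ``$G_Y=\{X\mid (X,Y)\in\S\}$ is closed under union and has a maximum element, hence is of the form \emph{all $X$ containing some fixed set of atoms}'' is false. Closure under union does not make $G_Y$ an up-set. Take $A=\{a,b\}$ and $\S=\{(\emptyset,\emptyset),(\emptyset,ab),(ab,ab)\}$; one checks directly that $\S$ is complete and closed under here-union, yet $G_{ab}=\{\emptyset,ab\}$, which is not the set of supersets of any fixed $S\subseteq\{a,b\}$. With your recipe (``include $r_{Y,b}$ for those $b$ lying in every $X\in G_Y$'') you obtain no rules for $Y=ab$, because $\bigcap G_{ab}=\emptyset$; hence your $P$ admits $(a,ab)$ as an SE-model, but $(a,ab)\notin\S$. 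So $\SEQ(P)\neq\S$.

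\textbf{Rule-level issue.} Independently, your claim about $r_{Y,b}$ is incorrect. With head $Y$, positive body $\{b\}$ and negative body $A\setminus Y$, Lemma~\ref{lem:se} gives for $(X,Y)$ with $X\subseteq Y$: Condition~(\ref{lem:se:C}) holds iff $X\neq\emptyset$, and Condition~(\ref{lem:se:D}) holds iff $b\notin X$ (since $Y\cap Y\neq\emptyset$). At least one of these always holds, so $(X,Y)\mse r_{Y,b}$ for \emph{every} $X\subseteq Y$; the rule never excludes anything. A rule that actually fails on a given $(\hX,\hY)$ must have its head disjoint from $\hX$ but meeting $\hY$, so the head has to depend on $\hX$, not just on $\hY$.

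This is exactly why the paper's construction proceeds per excluded pair $(\hX,\hY)\notin\S$ (with $(\hY,\hY)\in\S$) rather than per $(Y,b)$. Closure under here-union is used \emph{locally}: the union $X_0=\bigcup\{X\in G_{\hY}:X\subseteq\hX\}$ lies in $G_{\hY}$ and is a proper subset of $\hX$, which produces a single witness $b\in\hX\setminus X_0$ to serve as the (one-atom) positive body; the head is then built from witnesses $a_X\in X\setminus\hX$ for each $X\in G_{\hY}$ not contained in $\hX$, and the negative body from witnesses in $Y\setminus\hY$ for the other admissible tops $Y$. That per-pair dependence is essential; a scheme indexed only by $(Y,b)$ cannot separate the admissible here-parts from the inadmissible ones.
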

\begin{proof}
  Let $Z$ be a set of atoms, $\S \subseteq \S_{Z}$ a set of
  SE-interpretations that is complete and closed under here-union, and
  $\Y = \{Y\colon (X,Y)\in \S\}$.  Consider $\hY\subseteq Z$ such
  that $(\hY,\hY)\notin \S$. Since $\S$ is complete, for every
  $Y\in\Y$, $(Y,Y)\in\S$. Thus, for every~$Y\in\Y$, $Y\neq\hY$. We
  define
  \begin{align*}
    \Y'=\{Y\in\Y\colon Y\subseteq \hY\} \ \text{ and }\
    \Y''=\{Y\in \Y\colon Y\setminus \hY\neq\emptyset\}.
  \end{align*}

  Clearly, $\Y''\cap\Y'=\emptyset$ and $\Y'\cup \Y''=\Y$. For
  each~$Y\in\Y'$, we select an element~$b_Y\in \hY\setminus Y$ (it is
  possible, as $Y\neq \hY$).  Similarly, for each~$Y\in\Y''$, we
  select an element~$c_Y\in Y\setminus \hY$. We set
  $B_{\hY}= \{b_Y\colon Y\in \Y'\}$ and
  $C_{\hY}=\{c_Y\colon Y\in\Y''\}$, and we define
  \begin{align*}
    r_{\hY}=\quad\la B_{\hY},\naf C_{\hY}.
  \end{align*}

  We note that for every~$(X,Y) \in \S$, $(X,Y)\mse r_{\hY}$.  Indeed,
  if $Y\in\Y'$, then $b_Y\in B_{\hY}\setminus Y$ and so,
  Condition~(\ref{lem:se:B}) of Lemma~\ref{lem:se} holds. Otherwise,
  $Y\in\Y''$ and $c_Y\in C_{\hY}\cap Y$. Thus,
  Condition~(\ref{lem:se:A}) of that lemma holds. On the other hand,
  $(\hY,\hY)\not\mse r_{\hY}$. Indeed, $C_{\hY}\cap \hY=\emptyset$ and
  $B_{\hY} \subseteq \hY$, so neither Condition~(\ref{lem:se:A}) nor
  Condition~(\ref{lem:se:B}) holds.  Moreover, neither
  Condition~(\ref{lem:se:C}) nor Condition~(\ref{lem:se:D}) holds, as
  $r_{\hY}$ is a constraint.

  Next, let us consider $(\hX,\hY)\notin\S$, where $\hY\in\Y$, and let
  us define $\X = \{X\colon (X,\hY)\in \S\}$. We set
  \begin{align*}
    \X'= \{X\in \X\colon X\subseteq \hX\}\ \ \mbox{and}\ \ \X''=\{X\in
    \X\colon X\setminus\hX\neq\emptyset\}.
  \end{align*}
  
  If $\X'\neq\emptyset$, let $X_0=\bigcup\X'$. Since $\S$ is closed
  under here-union, $X_0$ is a \emph{proper} subset of $X$. We select
  an arbitrary element~$b \in \hX\setminus X_0$ and define
  $B=\{b\}$. Otherwise, we define $B=\emptyset$.

  If $\X''\neq\emptyset$, for each~$X\in \X''$, we
  select~$a_X\in X\setminus \hX$, and we define
  $A=\{a_X\colon X\in\X''\}$.  Otherwise, we select any
  element~$a\in \hY\setminus\hX$ and define $A=\{a\}$. We note that by
  construction, $A\subseteq \hY$.

  Next, we define
  \begin{align*}
    \Z=\{Y\in\Y\setminus\{\hY\}\colon Y\setminus \hY\neq\emptyset\}.
  \end{align*}

  For each~$Y\in \Z$, we select~$c_Y\in Y\setminus\hY$ and
  set~$C=\{c_Y\colon Y\in \Y'\}$.

  Finally, we define a rule~$r_{(\hX,\hY)}$ as
  \begin{align*}
    r_{(\hX,\hY)} = A \la B,\naf C.
  \end{align*}

  It is easy to see that $(\hX,\hY)\not\mse r_{(\hX,\hY)}$. Indeed, by
  construction, $\hY\cap C=\emptyset$, $B\subseteq \hX\subseteq\hY$,
  and $A \cap \hX=\emptyset$. The second condition implies that
  $B\setminus\hY=\emptyset$ and $B\setminus\hX=\emptyset$. Thus, none
  of the Conditions~(1)--(4) of Lemma~\ref{lem:se} holds.

  We will show that for every~$(X,Y)\in\S$, $(X,Y)\mse r_{(\hX,\hY)}$.
  First, assume that $Y\setminus\hY\neq\emptyset$. It follows that
  $c_Y\in C\cap Y$ and so, $C\cap Y\neq\emptyset$. Thus,
  $(X,Y)\mse r_{(\hX,\hY)}$ by Condition~(\ref{lem:se:A}).

  Assume that $Y\subseteq\hY$. Since $(X,Y)\in \S$ and
  $(\hY,\hY)\in\S$, $(X,\hY)\in \S$. Thus, $X\in \X$. If
  $X\setminus\hX \neq\emptyset$, then $\X\in\X''$ and so,
  $X\cap A\neq\emptyset$.  Consequently, $(X,Y)\mse r_{(\hX,\hY)}$ by
  Condition~(\ref{lem:se:C}). Otherwise, $X\in \X'$ and $B=\{b\}$, for
  some~$b\in \hX\setminus X_0$. In particular,
  $B\setminus X\neq\emptyset$.  Since $(X,Y)\in \S$, $(Y,Y) \in \S$
  and so, $(Y,\hY)\in \S$.  Consequently, $Y\in \X$. If $Y\in\X''$,
  then $Y\cap A \neq\emptyset$ and $(X,Y)\mse r_{(\hX,\hY)}$ by
  Condition~(\ref{lem:se:D}). If $Y\in \X'$, then
  $b\in \hX\setminus Y$ and so, $B\setminus Y\neq\emptyset$.  Thus,
  $(X,Y)\mse r_{(\hX,\hY)}$ by Condition~(\ref{lem:se:B}).

  Let $P$ consist of all rules~$r_{\hY}$, where $\hY\subseteq Z$ and
  $Y\notin Y$ and of all rules~$r_{(\hX,\hY)}$ such that
  $\hX,\hY\subseteq Z$, $\hX\subseteq\hY$ and
  $(\hX,\hY)\notin\S$. Clearly, $\S\subseteq SE(P)$.  Let
  $(\hX,\hY)\notin\S$. If $\hY\notin\Y$, then
  $(\hY,\hY)\not\mse r_{\hY}$.  Thus, $(\hX,\hY)\notin SE(P)$. If
  $\hY\in\Y$, then $(\hX,\hY)\not\mse r_{(\hX,\hY)}$. Thus,
  $(\hX,\hY)\notin SE(P)$. It follows that $\SEQ(P)=\S$.
\end{proof}

Thus the two theorems together provide a complete characterization of
collections of SE-interpretations that can arise as collections of
SE-models of dual-normal programs.

We now turn to the corresponding results for sets of UE-models of
dual-normal programs. The key role here is played by the notion of
splittability.

\begin{theorem}\label{thm:ue:1}
  For every dual-normal program~$P$, $\UE(P)$ is UE-complete and
  splittable.
\end{theorem}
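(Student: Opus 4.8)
The plan is to establish the two properties separately. UE-completeness has three conditions. Condition~(1), that $(X,Y)\in\UE(P)$ implies $(Y,Y)\in\UE(P)$, and Condition~(3), that $(X,Y),(X',Y)\in\UE(P)$ with $X\subset X'$ forces $X'=Y$, hold for $\UE(P)$ for \emph{every} program~$P$ and are part of the general result of \citex{EiterFPTW13} that $\UE(P)$ is always UE-complete; so nothing dual-normal-specific is needed there. Hence the real work is Condition~(2) of UE-completeness and splittability, and these I would handle together since splittability is strictly the stronger of the two and its proof subsumes Condition~(2).

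For splittability, suppose $(Z,Z)\in\UE(P)$ and $(X_1,Y_1),\ldots,(X_k,Y_k)\in\UE(P)$ with each $Y_i\subseteq Z$. Each $(X_i,Y_i)$ is in particular an SE-model, so $(X_i,Z)\in\SEQ(P)$ by completeness of $\SEQ(P)$ (using $(Z,Z)\in\SEQ(P)$ and $Y_i\subseteq Z$). Now I invoke Theorem~\ref{thm:se:1}: $\SEQ(P)$ is closed under here-union, so iterating gives $(X_1\cup\cdots\cup X_k,Z)\in\SEQ(P)$. Set $W=X_1\cup\cdots\cup X_k$. If $W=Z$ we are done with the first disjunct of splittability. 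Otherwise $W\subset Z$, and $(W,Z)$ is an SE-model with $W\subsetneq Z$; among all SE-models $(V,Z)$ with $W\subseteq V\subset Z$ I would pick one with $V$ inclusion-maximal (such a maximal one exists since the collection is finite and nonempty, containing $(W,Z)$). The claim is that this maximal $(V,Z)$ is actually a UE-model, i.e.\ $(V,Z)\in\UE(P)$, which then gives the second disjunct of splittability with $Z'=V$. To see $(V,Z)$ is a UE-model: if it were not, there would be an SE-model $(V',Z)$ with $V\subsetneq V'\subseteq Z$ and $V'\neq Z$ — but that contradicts maximality of $V$. (The only escape would be $V'=Z$, which is disallowed by the definition of UE-model requiring $V'=Z$ for \emph{all} such strict supersets, so non-UE-ness of $(V,Z)$ genuinely produces a strictly larger $V'\subsetneq Z$.)

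Once splittability is in hand, Condition~(2) of UE-completeness is essentially the special case $k=1$: given $(X,Y)\in\UE(P)$ and $(Z,Z)\in\UE(P)$ with $Y\subset Z$, splittability applied to the single pair $(X,Y)$ yields either $(X,Z)\in\UE(P)$ — giving $Y'$ with $Y\subseteq X\subseteq Y'$ is not quite right, so instead — or a $Z'$ with $X\subseteq Z'\subset Z$ and $(Z',Z)\in\UE(P)$; in the former case one then uses that $(X,Z)$, being an SE-model with $X\subseteq Y\subset Z$, lies below some UE-model $(Y',Z)$ with $Y\subseteq Y'\subset Z$ (again by a maximality argument as above), and in the latter case $Z'$ itself can be enlarged to a maximal UE-model, which still sits strictly below $Z$ and contains $Y$. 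Either way Condition~(2) follows.

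The main obstacle I anticipate is the bookkeeping in the maximality arguments — making sure that when we pass to an inclusion-maximal SE-model $(V,Z)$ strictly below the diagonal, it is indeed a UE-model and indeed still lies above the relevant union $X_1\cup\cdots\cup X_k$ (resp.\ above $Y$). This is where the hypothesis $(Z,Z)\in\UE(P)$, rather than merely $(Z,Z)\in\SEQ(P)$, gets used implicitly: it guarantees we are working inside the UE-model structure and that the diagonal element $Z$ is itself "reachable", so the finitely many SE-models of the form $(\cdot,Z)$ form a well-behaved lattice-like poset under here-union (Theorem~\ref{thm:se:1}) in which maximal elements below $Z$ are exactly the non-diagonal UE-models with second component $Z$. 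Everything else is a routine combination of Lemma~\ref{lem:se}, completeness of $\SEQ(P)$, and here-union closure.
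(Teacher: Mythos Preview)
Your splittability argument is correct and is essentially the paper's proof: lift each $(X_i,Y_i)$ to $(X_i,Z)\in\SEQ(P)$ by completeness, take the here-union using Theorem~\ref{thm:se:1}, and if the union $W$ is strictly below $Z$, pass to a maximal SE-model $(V,Z)$ with $W\subseteq V\subset Z$, which is then a UE-model by finiteness. The paper phrases this last step as ``by the definition of UE-models and finiteness of $P$''; your explicit maximality argument is the same thing unpacked.

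Where you diverge from the paper is in the treatment of UE-completeness. The paper handles it in one sentence: $\UE(P)$ is UE-complete for \emph{every} program~$P$ (this is the result of \citex{EiterFPTW13} you yourself cite), so all three conditions come for free and only splittability needs work. You acknowledge this general result but then, somewhat inconsistently, single out Condition~(2) as ``real work'' and try to re-derive it from splittability. That paragraph is unnecessary, and as written it does not quite go through: applying splittability to the single pair $(X,Y)$ produces a $Z'$ with $X\subseteq Z'\subset Z$, whereas Condition~(2) demands $Y\subseteq Y'\subset Z$. (The fix would be to apply splittability to $(Y,Y)$ rather than $(X,Y)$, using Condition~(1) to get $(Y,Y)\in\UE(P)$ first.) But since UE-completeness is already known in general, you should simply drop that paragraph; the remaining proof then coincides with the paper's.
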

\begin{proof}
  The set~$\UE(P)$ is UE-complete for every program~$P$. Thus, we only
  need to show splittability. Toward this end, let
  $(X_1,Y_1),\ldots,(X_k,Y_k),(Z,Z) \in \UE(P)$, where
  $Y_i\subseteq Z$, for every $i=1,\ldots,k$.
  Since,$(X_1, Y_1),\ldots,(X_k,Y_k),(Z,Z)\in\SEQ(P)$, it follows that
  $(X_1,Z),\ldots,(X_k,Z) \in \SEQ(P)$ (by the second condition of
  completeness). Since $\SEQ(P)$ is closed under here-union,
  $(X_1\cup\ldots\cup X_k,Z)\in \SEQ(P)$. If $(X_1\cup\ldots\cup X_k,Z)
  \in \UE(P)$ we are done. Otherwise, $X_1\cup\ldots\cup X_k\subset Z$
  (since $(Z,Z)\in\UE(P)$) and, by the definition of UE-models and
  finiteness of $P$, there is $Z'$ such that $X_1\cup\ldots \cup X_k\subset
  Z'\subset Z$ such that $(Z',Z)\in\UE(P)$.
\end{proof}

As before, the conditions are also sufficient.

\begin{theorem}\label{thm:dnue} 
  For every set $\U\subseteq\S_A$ of SE-interpretations that is
  UE-complete and splittable, there is a dual-normal program~$P$ with
  $\at(P) \subseteq A$ such that $\UE(P)=\U$.
\end{theorem}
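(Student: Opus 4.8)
The plan is to mimic the construction from the proof of Theorem~\ref{thm:dnse}, but this time producing a dual-normal program whose \emph{UE}-models — not its SE-models — coincide with the given set $\U$. The crucial observation is that a dual-normal program is also a program to which Theorems~\ref{thm:se:1} and~\ref{thm:dnse} apply, and that the UE-models of a program are completely determined by its SE-models (an SE-model $(X,Y)$ is a UE-model iff it is either of the form $(Y,Y)$ or maximal below $(Y,Y)$). So the real task is: given a UE-complete and splittable $\U$, build a \emph{complete, here-union-closed} set $\S$ of SE-interpretations whose UE-projection is exactly $\U$, and then invoke Theorem~\ref{thm:dnse} to realize $\S$ as $\SEQ(P)$ for a dual-normal $P$; that $P$ then automatically has $\UE(P)=\U$.

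First I would define the candidate $\S$. The natural guess is the \emph{here-union closure} of $\U$: for each fixed $Y$ with $(Y,Y)\in\U$, take all finite unions $X_{i_1}\cup\dots\cup X_{i_j}$ of sets $X$ with $(X,Y)\in\U$, together with $(Y,Y)$ itself, and then add, for each $Y\subseteq Y'$ with $(Y',Y')\in\U$, all pairs $(X,Y')$ obtained by completeness (condition~(2) of completeness). One has to check three things: (i) $\S$ is complete; (ii) $\S$ is closed under here-union; (iii) $\UE(\S)=\U$, where $\UE(\S)$ denotes the set of SE-models in $\S$ that are UE-models of any program realizing $\S$. Parts (i) and (ii) should be routine from the construction. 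For (iii), the inclusion $\U\subseteq\S$ is immediate, and every $(X,Y)\in\U$ is maximal-below-$(Y,Y)$ in $\U$; one must verify it stays maximal in the larger set $\S$ — i.e., that the here-union closure does not create a \emph{new} SE-interpretation $(X',Y)$ with $X\subset X'\subset Y$. Conversely, any pair in $\S\setminus\U$ of the form $(X,Y)$ with $X\subset Y$ obtained by a nontrivial union or by the completeness step must be shown \emph{not} to be a UE-model, i.e., it is dominated by some strictly larger $(X'',Y)\in\S$. This is exactly where splittability of $\U$ enters: splittability guarantees that whenever we take a union $X_1\cup\dots\cup X_k$ (with the $Y_i\subseteq Y$) that is not already a "good" element, there is an intermediate $(Z',Y)\in\U$ strictly containing it, which then sits in $\S$ and dominates it; combined with UE-completeness (condition~(3)), this forces the union to coincide with one of the $X_i$ or to be dominated, so no spurious UE-model is produced.

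The main obstacle I expect is precisely the verification that $\UE(\S)=\U$, and within that, the "$\supseteq$ is tight" direction: ensuring that the here-union closure together with the upward-propagation forced by completeness does not accidentally promote some non-UE pair of $\U$ to a non-maximal status, nor create a brand-new maximal-below element outside $\U$. Both phenomena are ruled out by splittability together with condition~(3) of UE-completeness, and the counterexample given after Proposition~\ref{prop:6} (the set $\{(b,b),(c,c),(ab,abcd),(cd,abcd),(abcd,abcd)\}$, here-union-closed but not splittable) shows that here-union closure alone is genuinely insufficient — so the argument must use splittability in an essential way, not merely closure under here-union. Once $\S$ is shown to be complete and here-union-closed with $\UE(\S)=\U$, Theorem~\ref{thm:dnse} yields a dual-normal $P$ with $\at(P)\subseteq A$ and $\SEQ(P)=\S$, hence $\UE(P)=\UE(\S)=\U$, completing the proof.
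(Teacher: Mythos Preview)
Your proposal is correct and follows essentially the same approach as the paper: build from $\U$ a complete, here-union-closed set $\S$, invoke Theorem~\ref{thm:dnse} to obtain a dual-normal $P$ with $\SEQ(P)=\S$, and then verify $\UE(P)=\U$ using splittability together with condition~(3) of UE-completeness. One small point: in your description you first take here-unions at each fixed level $Y$ and \emph{then} propagate upward by completeness, but this order does not in general yield a here-union-closed set (pairs $(X_1,Y_1),(X_2,Y_2)$ with $Y_1\neq Y_2$ both below $Z$ get lifted to $(X_1,Z),(X_2,Z)$ without $(X_1\cup X_2,Z)$ being added). The paper avoids this by defining, for each $Z$ with $(Z,Z)\in\U$, the set $\U_Z=\{X:(X,Y)\in\U\text{ for some }Y\subseteq Z\}$ and then setting $\overline{\U}=\{(X,Z):X\in cl(\U_Z)\}$, i.e., it propagates first and takes the union closure afterward; with this adjustment your verification of (i)--(iii) goes through exactly as you outline.
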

\begin{proof}
  For every~$Z$ such that $(Z,Z)\in\U$, we define
  \begin{align*}
    \U_Z=\{X\colon (X,Y)\in\U, \text { for some } Y\subseteq Z \}.
  \end{align*}
  and we denote by $cl(\U_Z)$ the closure of $\U_Z$ under union.
  Finally, we define the \emph{SE-closure} $\overline{\U}$ of $\U$ by
  setting
  \begin{align*}
    \overline{\U}=\{(X,Z)\colon X\in cl(\U_Z)\}.
  \end{align*}

  We note that if $(X,Z)\in \overline{\U}$, then $X\in cl(\U_Z)$.
  Thus, $\U_Z$ is defined, that is, $(Z,Z)\in\U$. Consequently,
  $Z\in cl(\U_Z)$ and $(Z,Z)\in\overline{\U}$.

  Next, assume that $(X,Y)\in\overline{\U}$, $(Z,Z)\in\overline{\U}$,
  and $Y\subset Z$. It follows that $X\in cl(\U_Y)$. Thus, there are
  sets~$X_1,\ldots, X_k$ such that $X=\bigcup_{i=1}^n X_i$ and
  $X_i\in\U_Y$, for every $i=1,\ldots, k$.  Let us consider any such
  set~$X_i$. By definition, there is a set~$Y'$ such that
  $(X_i,Y')\in \U$ and $Y'\subseteq Y$. Since $Y \subseteq Z$,
  $Y'\subseteq Z$. It follows that $X_i\in\U_Z$.  Thus,
  $X_1, \ldots,X_k\in \U_Z$. Consequently, $X\in cl(\U_Z)$ and
  $(X,Z) \in \overline{\U}$.

  Thus, $\overline{\U}$ is complete and, by the construction, closed
  under here-unions. It follows that there is a dual-normal
  program~$P$ such that $\SEQ(P)=\overline{\U}$. We will show that
  $\UE(P)=\U$.

  First, let $(X,Y) \in\U$. It follows that $X\in\U_Y$. Thus,
  $X\in cl(\U_Y)$ and $(X,Y)\in \overline{\U}$. Consequently,
  $(X,Y)\in\SEQ(P)$. Let us assume that for some $(X',Y)\in\SEQ(P)$,
  $X\subset X'\subset Y$. Since $(X',Y)\in \SEQ(P)$,
  $(X',Y)\in \overline{\U}$ and so, $X'\in cl(\U_Y)$. Thus,
  $X'=X_1\cup\ldots\cup X_k$, where $X_1,\ldots,X_k\in\U_Y$ or,
  equivalently, $(X_1,Y),\ldots,(X_k,Y)\in\U$. Since $X'\subset Y$, it
  follows by splittability that there is $(Y',Y)\in\U$ such that
  $Y'\subset Y$ and $X_1\cup\ldots\cup X_k\subseteq Y'$. Since
  $(X_1,Y)\in\U$ and $X_1\subseteq Y'\subset Y$, it follows that
  $X_1=Y'$. Consequently, $X'=X_1\cup\ldots\cup X_k=Y'$.  Thus,
  $(X',Y)\in\U$, a contradiction. It follows that $(X,Y)\in\UE(P)$.

  Conversely, let $(X,Y)\in\UE(P)$. It follows that $(X,Y)\in\SEQ(P)$
  and, since $\SEQ(P)=\overline{\U}$, $(X,Y)\in\overline{\U}$. By the
  definition, $X\in cl(\U_Y)$. Since $\U_Y$ is defined,
  $(Y,Y)\in\U$. Thus, if $X=Y$, the assertion follows. Otherwise,
  $X\subset Y$. In this case, we reason as follows. Since
  $X\in cl(\U_Y)$, as before we have $X=X_1\cup\ldots\cup X_k$, for
  some sets~$X_i$, $1\leq i\leq k$, such that $(X_i,Y)\in \U$.  By
  splittability, there is $Y'$ such that
  $X_1\cup\ldots\cup X_k\subseteq Y'$, $Y'\subset Y$ and
  $(Y',Y)\in\U$. Again as before, we obtain that $X_1= Y'$ and so,
  $X=X_1\cup\ldots\cup X_k=Y'$. Thus, $(X,Y)\in\U$.
\end{proof}

We briefly discuss some implications of our results.  Let
\begin{align*}
  P =& \{ a \vee b\rsep \bot \la \naf c\rsep c\la a,b\rsep a\la c\rsep
       b\la c\}.
\end{align*}
Then $\SEQ(P)=\{(abc,abc),(a,abc),(b,abc)\}$ and it is neither closed
under here-union nor under here-intersection.  Thus, for $P$ there are
no strongly equivalent programs in the classes of normal and
dual-normal programs.  Moreover, $\UE(P)$ is not closed under
here-union and so, not splittable (Proposition~\ref{prop:6}).
Therefore there is no dual-normal program~$P'$ such that $P\equivu P'$
(such a normal $P'$ exists, however).  Now let us consider the normal
program $Q=P\setminus \{a \vee b\}$. We have
$\SEQ(Q)=\SEQ(P)\cup\{(\emptyset,abc)\}$.  Since $\SEQ(Q)$ is not
closed under here-union, there is no dual-normal program strongly
equivalent to $Q$.  Finally, consider the dual-normal program
$R=P\setminus \{ c\la a,b\}$. We have
$\SEQ(R)=\SEQ(P)\cup \{(ab,abc)\}$.  Since $\SEQ(R)$ is not closed
under here-intersection, there is no normal program strongly
equivalent to $R$.

\subsection{Complexity}
We complement the following known results~\cite{Eiter04TR}: Checking
strong equivalence between programs is $\coNP$-complete; tractability
is only known for the case when both programs are Horn. Checking
uniform equivalence between programs is $\PiP{2}$-complete. If one of
the programs is normal, then the problem is $\coNP$-complete.

\begin{theorem}\label{the:comp-strong}
  Checking strong equivalence between singular programs remains
  $\coNP$-hard.
\end{theorem}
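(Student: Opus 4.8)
The statement asks to show that deciding strong equivalence remains \coNP-hard even when restricted to singular programs (those that are both normal and dual-normal, i.e., every proper rule has at most one atom in the head \emph{and} at most one atom in the positive body). Since strong equivalence is characterized by equality of SE-model sets (\cite{Lifschitz01}), the natural route is a reduction from \UNSAT (or, dually, from \SAT to the complement). Given a CNF formula $\varphi$, I would construct two singular programs $P_\varphi$ and $Q_\varphi$ such that $P_\varphi \equivs Q_\varphi$ if and only if $\varphi$ is unsatisfiable. The cleanest way is to make $Q_\varphi$ a fixed ``trivial'' program and engineer $P_\varphi$ so that $\SEQ(P_\varphi)$ differs from $\SEQ(Q_\varphi)$ exactly when $\varphi$ has a satisfying assignment — the witnessing SE-interpretation encoding that assignment.

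\textbf{Key steps.} First, I would recall the standard \coNP-hardness proof for strong equivalence between \emph{arbitrary} programs from \cite{Eiter04TR} and examine what program features it uses; the goal is to massage it into the singular fragment. Concretely, for a clause set $\varphi$ over variables $x_1,\dots,x_n$, introduce atoms $x_i,\bar x_i$ and a marker atom $f$. I would take $P_\varphi$ to contain, for each $i$, the choice-like constraints that force, in any SE-model $(Y,Y)$ restricted to witnessing assignments, exactly one of $x_i,\bar x_i$; for each clause $c = \ell_{i_1}\vee\dots\vee\ell_{i_k}$ a rule deriving $f$ from the complementary literals (which is a Horn rule, hence normal, but may have $k>1$ positive body atoms — so I must instead \emph{chain} it through auxiliary atoms $f^c_1,\dots,f^c_k$ with rules $f^c_j \la f^c_{j-1}, \ell$, each having a single positive body atom, preserving both normality and dual-normality); and a constraint $\bot \la \naf f$ or similar. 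Then $Q_\varphi$ is $P_\varphi$ with one carefully chosen rule deleted or a related program differing in exactly the SE-interpretation that corresponds to a model of $\varphi$. Using Lemma~\ref{lem:se} rule-by-rule, I would verify: (a) both programs are singular by inspection of every rule's head and positive body; (b) $\SEQ(P_\varphi)\subseteq\SEQ(Q_\varphi)$ always; (c) the symmetric difference is nonempty iff $\varphi$ is satisfiable, with the extra SE-model explicitly built from a satisfying assignment. Finally, polynomial-time computability of $P_\varphi, Q_\varphi$ from $\varphi$ is immediate since the chaining blows up each clause only linearly.

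\textbf{Main obstacle.} The delicate point is respecting \emph{both} syntactic restrictions simultaneously while still encoding clause satisfaction: a clause is naturally a disjunction (wanting many head atoms) or, via contraposition, a conjunction in the body (wanting many positive body atoms), and singular programs forbid both. The chaining trick through fresh auxiliary atoms resolves this, but one must then recheck that the auxiliary atoms do not spuriously enlarge or shrink the SE-model sets in a way that breaks the ``iff'' — in particular that the SE-models over the auxiliary atoms are fully determined by the SE-models over the original atoms, so that $\SEQ(P_\varphi)=\SEQ(Q_\varphi)$ collapses to a statement purely about assignments to the $x_i,\bar x_i$. I expect the bulk of the work to be this bookkeeping: a lemma stating that every SE-model of $P_\varphi$ (and of $Q_\varphi$) has a canonical shape on the auxiliary atoms, after which the equivalence $P_\varphi\equivs Q_\varphi \iff \varphi\in\UNSAT$ follows by a short case analysis using Lemma~\ref{lem:se}. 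Since \UNSAT is \coNP-complete, \coNP-hardness of strong equivalence for singular programs follows.
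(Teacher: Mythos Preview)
Your overall plan---reduce \UNSAT to strong-equivalence testing against a fixed inconsistent program---is exactly what the paper does. But you work much too hard on the clause encoding and, as stated, your chaining does not actually stay singular.

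Concretely, you propose rules of the shape $f^c_j \la f^c_{j-1}, \ell$ and claim ``each having a single positive body atom.'' That is only true when $\ell$ happens to be a \emph{negated} atom; whenever the complementary literal you need is a positive atom (which is unavoidable for clauses containing a negative literal), the rule has \emph{two} positive body atoms and is not dual-normal. So the bookkeeping you anticipate would in fact reveal a syntactic violation, not merely an SE-model nuisance.

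The paper's construction sidesteps the whole issue with a simple observation you overlook: once you have the atoms $\bar v$ around (together with the choice-like rules $v\la\naf\bar v$, $\bar v\la\naf v$ and the constraint $\la v,\bar v$), every clause $l_{i1}\vee l_{i2}\vee l_{i3}$ can be encoded as the single constraint
\[
\la \naf l^*_{i1},\ \naf l^*_{i2},\ \naf l^*_{i3}
\]
with $l^*=v$ for a positive literal $v$ and $l^*=\bar v$ for a negative literal $\neg v$. This rule has empty head and empty \emph{positive} body---all the work is done in $B^-$---so it is trivially both normal and dual-normal. No auxiliary atoms, no chaining, no SE-model bookkeeping over fresh symbols. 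The comparison program is the fixed inconsistent $\{a\la;\ \la a\}$, whose SE-model set is empty; hence $P[F]\equivs\{a\la;\ \la a\}$ iff $P[F]$ has no classical model, which in turn holds iff $F$ is unsatisfiable.

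In short: recall that the singular restriction does not constrain $B^-$ at all, and that constraints are exempt from the $|B^+|\le 1$ bound. Exploiting this makes the reduction immediate; your proposal identifies the right reduction target but invents an obstacle that is not there and then does not fully clear it.
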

\begin{proof}
  Take the standard reduction from \UNSAT (as e.g.\ used
  by~\citex{PearceTW09}) and let
  $F=\bigwedge_{i=1}^n (l_{i1} \vee l_{i2} \vee l_{i3})$. Define the
  singular program %
  \begin{align*}
    P[F] =& \{ v\la \naf \bar{v}\rsep \bar{v}\la\naf v\rsep
             \la v,\naf{v} \mid v\in \at(F)\} \cup
         \{ \la \naf l^*_{i1}, \naf l^*_{i2} , \naf l^*_{i3} \mid 1\leq i \leq n
            \}
  \end{align*}
  where $l^*=l$ for positive literals and $l^*=\bar{v}$ for negative
  ones.  One can show that $F$ is a positive instance of \UNSAT if and
  only if $P[F]\equivs \{a \la\rsep \la a\}$.  Since the reduction
  works in polynomial time, $\coNP$-hardness follows.
\end{proof}

\begin{theorem}
  Checking uniform equivalence between dual-normal programs is
  $\coNP$-complete. Hardness holds even in the case the programs are
  singular.
\end{theorem}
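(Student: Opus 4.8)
The plan is to obtain \coNP-hardness by a minor twist on Theorem~\ref{the:comp-strong} and \coNP-membership by placing the complement in \NP, exploiting the here-union closure of the SE-models of dual-normal programs.

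\emph{Hardness.} I would reuse verbatim the singular program $P[F]$ from the proof of Theorem~\ref{the:comp-strong}; by the standard construction, $F$ is unsatisfiable iff $P[F]$ has no classical model iff $\SEQ(P[F])=\emptyset$. The key observation is that for \emph{every} program $R$ one has $\SEQ(R)=\emptyset$ iff $\UE(R)=\emptyset$: indeed $\UE(R)\subseteq\SEQ(R)$, and whenever $(Y,Y)\in\SEQ(R)$ the pair $(Y,Y)$ is vacuously a UE-model. Since $\{a \la\rsep \la a\}$ has no model, $\UE(\{a \la\rsep \la a\})=\emptyset$, so $P[F]\equivu\{a \la\rsep \la a\}$ iff $\UE(P[F])=\emptyset$ iff $F$ is unsatisfiable. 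As both programs are singular, \coNP-hardness holds already for singular programs.

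\emph{Membership.} Since $P\equivu Q$ iff $\UE(P)=\UE(Q)$, a nondeterministic algorithm for $P\not\equivu Q$ guesses an SE-interpretation $(X,Y)$ over $\at(P)\cup\at(Q)$ and accepts iff $(X,Y)$ is a UE-model of exactly one of $P,Q$. This is polynomial once we can test ``$(X,Y)\in\UE(P)$'' in polynomial time for dual-normal $P$, which is where the dual structure enters. Checking $(X,Y)\in\SEQ(P)$ is a model check on $P$ and $P^Y$; if $X=Y$ this already decides UE-membership. If $X\subsetneq Y$, then $(X,Y)\in\UE(P)$ iff no $X'$ satisfies $X\subsetneq X'\subsetneq Y$ and $X'\models P^Y$. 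Every constraint surviving in $P^Y$ has a positive body not contained in $Y$ (because $Y\models P$), hence is satisfied by every subset of $Y$, so for $X'\subseteq Y$ the condition $X'\models P^Y$ reduces to $X'\models P^Y_r$, and $P^Y_r$ is dual-Horn, whose models are closed under union (Proposition~\ref{prop:3} and the construction around Proposition~\ref{prop:4}). Thus the models of $P^Y_r$ inside $[X,Y]$ form a union-closed family with least element $X$ and greatest element $Y$, and I would show it has an intermediate element iff for some atom $a\in Y\setminus X$ the unique maximal model $Z_a$ of the dual-Horn program $D_a=P^Y_r\cup\{x\la\mid x\in X\}\cup\{\bot\la a\}\cup\{\bot\la b\mid b\in(\at(P)\cup\at(Q))\setminus Y\}$ properly contains $X$: $Z_a$ exists because $X\models D_a$; if $Z_a\supsetneq X$ then $Z_a$ is a model of $P^Y_r$ strictly between $X$ and $Y$; conversely, given any intermediate $X'$, choosing $a\in Y\setminus X'$ yields $X'\models D_a$, hence $X\subsetneq X'\subseteq Z_a$. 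Each $Z_a$ is computable in linear time by the algorithm behind Proposition~\ref{prop:4}, so the UE-model test is polynomial, the complement of uniform equivalence for dual-normal programs lies in \NP, and \coNP-completeness (already for singular programs) follows together with the hardness part.

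\emph{Main obstacle.} The delicate point is exactly the polynomial UE-model test: recognizing a UE-model is in general a \coNP\ task (maximality among SE-models sharing a fixed second component), and what rescues us is the here-union closure of $\SEQ(P)$ for dual-normal $P$ (Theorem~\ref{thm:se:1}) — the precise dual of the least-model computation exploited for normal programs — which lets ``no intermediate SE-model'' be decided by computing one maximal model of a dual-Horn program per candidate excluded atom. I would still treat the degenerate cases carefully: $X=Y$, $|Y\setminus X|=1$, and the fact that $D_a$ always has a model so that $Z_a$ is well defined.
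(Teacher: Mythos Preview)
Your proposal is correct. The membership argument is essentially the paper's: both guess an SE-interpretation $(X,Y)$ and test UE-membership in polynomial time by, for each atom excluded from $Y\setminus X$, computing the unique maximal model of a dual-Horn theory obtained from $P^Y$ by fixing $X$ as facts and forbidding atoms outside $Y$ and the chosen atom; $(X,Y)$ is a UE-model iff every such maximal model collapses to $X$. Your write-up is in fact more careful than the paper's sketch on two points: you explicitly separate off the constraints of $P^Y$ (which need not have singleton positive bodies, so $P^Y$ itself is not literally dual-Horn) by observing they are vacuous on subsets of $Y$, and you verify that $X\models D_a$ so that the maximal model $Z_a$ is well defined.

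The hardness argument differs from the paper's. The paper simply invokes an external reduction (Theorem~6.6 of \cite{Eiter04TR}); you instead recycle the singular program $P[F]$ from Theorem~\ref{the:comp-strong} and compare it with an inconsistent singular program, noting that $\SEQ(R)=\emptyset$ iff $\UE(R)=\emptyset$, so strong and uniform equivalence to an inconsistent program coincide. This is a legitimate and more self-contained route that stays entirely within the paper. What the external reduction buys is independence from the strong-equivalence result; what your route buys is economy, since no new construction is needed. Your observation that $(Y,Y)\in\SEQ(R)$ already forces $(Y,Y)\in\UE(R)$ is exactly the hinge, and it is sound.
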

\begin{proof}
  For membership, consider the following algorithm for the
  complementary problem.  We guess $(X,Y)$ and check whether
  $(X,Y)\in\UE(P)\setminus \UE(Q)$ or
  $(X,Y)\in\UE(Q)\setminus \UE(P)$.  Checking whether $(X,Y)\in\UE(P)$
  can be done efficiently: First check $(Y,Y)\in\UE(P)$ which reduces
  to classical model checking.  If the test fails or $X=Y$ we are
  done. Otherwise, we compute for each~$y\in Y\setminus X$ the maximal
  models of the dual-Horn theories
  \begin{align*}
    P^Y \cup X \cup \{ \la z \mid z\in \At\setminus Y \}
    \cup \{\la y\}.
  \end{align*}
  This can be done in polynomial time, too. If all maximal models are
  equal to $X$, we return true; otherwise false.  For hardness, one
  can employ the reduction used in the proof of Theorem~6.6
  in~\cite{Eiter04TR}.
\end{proof}

\section{Conclusions}
We studied properties of dual-normal programs, the ``forgotten'' class
of disjunctive programs, for which deciding the existence of answer
sets remains \NP-complete. We provided translations of dual-normal
programs to propositional theories and to normal programs, and
characterizations of sets of SE-interpretations that arise as sets of
SE- and UE-models of dual-normal programs. We also established the
\coNP-completeness of deciding strong and uniform equivalence between
dual-normal programs, showing hardness even under additional syntactic
restrictions.

Our paper raises several interesting issues for future work. First,
the BCF programs that we introduced as a generalization of dual-normal
programs deserve further study because of their duality to HCF
programs, and good computational properties (\NP-completeness of
deciding existence of answer sets).  We believe that BCF programs
provide a promising class to encode certain problems, since they also
allow certain conjunctions in the positive body.  Recall that the
operation of \emph{shifting} transforms HCF programs into normal ones
while preserving the answer sets~\cite{Ben-EliyahuD94}. An analog of
shifting for BCF programs would introduce negations in the heads of
the rules. Thus, we plan to explore shifting within the broader
setting of Lifschitz-Woo programs~\cite{lw92}.
On the other hand, singular programs, another class of programs we
introduced, deserve attention due to their simplicity --- they are
both normal and dual-normal. As concerns dual-normal programs
themselves, the key question is to establish whether more concise
translations to \SAT and normal programs are possible, as such
translations may lead to effective ways of computing answer sets.

\bibliographystyle{named}

\end{document}